\documentclass[man,floatsintext,a4paper]{apa7}

\usepackage[british]{babel}
\usepackage{csquotes}
\usepackage{amsmath,amssymb,amsthm,mathtools}
\usepackage{cancel}
\usepackage{bm}
\usepackage{graphicx}
\usepackage{booktabs}
\usepackage{algorithm}
\usepackage{algpseudocode}
\usepackage{xcolor}
\usepackage{hyperref}
\usepackage{subcaption}
\usepackage{caption}
\usepackage{tikz}
\usetikzlibrary{positioning,arrows.meta,fit,calc,backgrounds}
\tikzset{
  semnode/.style = {draw, minimum height=7mm, font=\small},
  obs/.style     = {semnode, rectangle, minimum width=8mm, fill=black!8},
  lat/.style     = {semnode, circle, minimum size=10mm, fill=black!8},
  given/.style   = {semnode, rectangle, minimum width=8mm, dashed},
  lbl/.style     = {font=\scriptsize, inner sep=1.5pt}
}

\hypersetup{
  colorlinks=true,
  linkcolor=blue!50!black,
  citecolor=blue!50!black,
  urlcolor=blue!50!black
}

\usepackage[style=apa,sortcites=true,sorting=nyt,backend=biber,natbib]{biblatex}
\DeclareLanguageMapping{british}{british-apa}
\DefineBibliographyExtras{british}{}
\AtEveryBibitem{\clearfield{annotation}}

\newtheorem{proposition}{Proposition}
\newtheorem{theorem}[proposition]{Theorem}
\newtheorem{lemma}[proposition]{Lemma}
\newtheorem{corollary}[proposition]{Corollary}

\newcommand{\tr}{\operatorname{tr}}
\newcommand{\elpd}{\operatorname{elpd}}
\newcommand{\lppd}{\operatorname{lppd}}
\newcommand{\loco}{\textsc{loco}}
\newcommand{\loso}{\textsc{loso}}
\newcommand{\E}{\mathbb{E}}
\newcommand{\Var}{\operatorname{Var}}
\newcommand{\R}{\mathbb{R}}
\newcommand{\one}{\mathbf{1}}

\newcommand{\rank}{\operatorname{rank}}

\title{Deterministic Leave-One-Cluster-Out Cross-Validation for Multilevel Bayesian Structural Equation Models}
\shorttitle{Deterministic LOCO for Multilevel SEM}

\authorsnames[1,{1,2},1,1]{Mohammad Alhyari, Haziq Jamil, Hans Montcho, Håvard Rue}
\authorsaffiliations{%
  {King Abdullah University of Science and Technology},
  {Universiti Brunei Darussalam}
}

\authornote{%
\addORCIDlink{Mohammad Alhyari}{0009-0009-8770-4660}
  \addORCIDlink{Haziq Jamil}{0000-0003-3298-1010},
  \addORCIDlink{Hans Montcho}{0000-0003-2510-2102},
  \addORCIDlink{Håvard Rue}{0000-0002-0222-1881}.
  
  Correspondence concerning this article should be addressed to Haziq Jamil, Computer, Electrical and Mathematical Sciences and Engineering (CEMSE) Division, King Abdullah University of Science and Technology, Thuwal, 23955-6900, Kingdom of Saudi Arabia. E-mail: haziq.jamil@kaust.edu.sa.
}

\abstract{%
We introduce a closed-form, refit-free procedure for leave-one-cluster-out (\loco{}) cross-validation in multilevel Gaussian Bayesian structural equation models (SEMs), together with predictive scoring of every nested submodel.
Conditional independence of clusters given the parameters expresses the cluster-deleted posterior as a functional of the full posterior.
The \loco{} predictive density is then a harmonic mean of the cluster likelihood, computable from a single fit in \texttt{INLAvaan}, the integrated nested Laplace approximation package for Bayesian SEM.
We evaluate the harmonic-mean expectation in closed form through a fully exponential Laplace approximation of the reciprocal cluster likelihood under a Gaussian posterior; candidate structural restrictions follow by Gaussian conditioning of the same Laplace summary.
The resulting Taylor elpd (expected log predictive density) scores are fully deterministic, requiring neither refitting nor Monte Carlo sampling, so the variance pathology of the naive harmonic-mean estimator does not arise.
A direct-sum decomposition of the compound-symmetric cluster covariance makes the cost independent of cluster size, orders of magnitude below brute-force refitting.
We validate against brute-force refits and Markov chain Monte Carlo in simulation, and illustrate the procedure on school-safety climate in PISA 2022 and on 16 candidate structures linking personality and well-being in the MIDUS sibling sample.
}

\keywords{Bayesian structural equation modelling;
cross-validation; integrated nested Laplace approximation; multilevel models; leave-one-cluster-out}

\usepackage{setspace}
\begin{document}
\maketitle

\section{Introduction}
\label{sec:intro}

Bayesian structural equation modelling (SEM) has matured into a standard tool in psychometric practice, with implementations spanning the major Markov chain Monte Carlo (MCMC) samplers \citep{muthen2012bayesian} and, more recently, the integrated nested Laplace approximation (INLA) framework \citep{rue2009approximate, vanniekerk2023new}.
This machinery has since been carried into SEMs with normally distributed outcomes \citep{jamil2026approximate}, where the latent variables are integrated out analytically.
Inference begins with a mode-finding step on the model parameters and a Laplace approximation about that mode, which further corrections refine into skewness-adjusted posterior marginals.
That work is implemented in the \texttt{INLAvaan} package \citep{jamil2026implementation}, which takes models in familiar \texttt{lavaan} syntax \citep{rosseel2012lavaan} and returns accurate posterior summaries in a fraction of the time.

With this widespread adoption, the substantive questions tasked to Bayesian SEM have scaled significantly in complexity.
Rather than choosing between a small set of theoretically motivated competitors, researchers increasingly navigate large spaces of candidate cross-loadings, structural regressions, and latent hierarchies \citep{jacobucci2016regularized,lu2016bayesian}.
\emph{Structure selection}, deciding which paths to include, is variable selection carried out on the latent structure, since a freed path admits a predictor to a structural equation. 
It is no longer a post-hoc check but a central modelling step, with direct implications for construct validity and theoretical parsimony.

The natural Bayesian tool for this selection step is predictive cross-validation.
Leave-one-out (LOO) estimates of the expected log predictive density (elpd) are by now standard practice in the Bayesian workflow \citep{gelman2026bayesian,vehtari2017practical}, providing a criterion grounded in out-of-sample generalisation rather than in-sample fit.
For SEM in particular, predictive criteria are an attractive alternative to traditional fit indices such as the root mean square error of approximation (RMSEA), the comparative fit index (CFI), and the Tucker--Lewis index (TLI), whose theoretical grounding assumes independent observations and large samples and whose behaviour under hierarchical dependence is at best opaque \autocite{ryu2009levelspecific,hsu2015detecting}.
Predictive criteria carry one important nuance in the multilevel case.
As \textcite{merkle2019bayesian} make explicit, observation-level LOO for clustered data answers a \emph{conditional} prediction question: predicting a new subject within an already-observed cluster, with the cluster's posterior random effect supplying borrowed strength.
The natural target for generalisation across populations is instead the \emph{marginal} predictive density of an entirely new cluster, with the cluster effect integrated against its prior, the marginal latent distribution, rather than against its data-conditional posterior.
The two targets differ both numerically and in what they reward.
Conditional leave-one-subject-out (\loso{}) favours models that fit cluster-specific idiosyncrasies, while marginal leave-one-cluster-out (\loco{}) favours models that capture stable population-level structure.
For structure selection, the goal is generalisation across populations, not interpolation within an observed cluster, and hence \loco{} is the appropriate target.

Software practice took some time to adapt, for reasons that were computational rather than conceptual.
Deleting a cluster deletes the latent effects attached to it, so the density of the held-out block is not the conditional one a sampler already evaluates but a marginal one, obtained by integrating those effects out \autocite{merkle2019bayesian}.
Where that integral has no closed form it must be computed for every cluster at every draw, whereas the observation-level score conditions on the latent effects and requires no integral at all, and \loso{} became the default even for multilevel models (see Section~\ref{sec:loso-conditional}).
Analytic marginalisation removes that obstacle.
\texttt{blavaan}'s Stan implementation integrates the latent variables out in closed form for continuous outcomes \autocite{merkle2021efficient}, so a cluster-level marginal likelihood comes as a by-product of the fit.
From there the score is routine, since the \texttt{loo} package \autocite{vehtari2017practical} requires only that density evaluated at every draw for every cluster, and Pareto-smoothed importance sampling (PSIS) applied to such a matrix returns a \loco{} score from a single fit, with no refits.
However, this route requires a full MCMC posterior sample and returns an importance-sampling estimate whose accuracy degrades precisely where a held-out cluster is influential enough to matter for model choice \autocite{merkle2026nuances}.

This paper supplies the required \loco~target by other means. We show that the \loco{} predictive density for a multilevel Gaussian Bayesian SEM admits a closed-form evaluation requiring no refits, no posterior sampling, and no importance weights.
The construction rests on three observations:

\begin{enumerate}
\item The cluster-level predictive density $p(\mathbf{y}_j \mid \bm\vartheta)$ of a cluster's responses on $p$ indicators, obtained by integrating out the within- and between-cluster latent effects, is a multivariate Gaussian with compound-symmetric covariance.
An explicit direct-sum decomposition into within- and between-cluster subspaces reduces all log-determinants, inverses, and quadratic-form evaluations to two $p \times p$ factorisations, at a cost independent of cluster size \autocite{rosseel2021evaluating}.

\item The cluster-deleted posterior satisfies the reweighting identity $\pi(\bm\vartheta \mid \mathbf{y}_{-j}) \propto \pi(\bm\vartheta \mid \mathbf{y}) / p(\mathbf{y}_j \mid \bm\vartheta)$, a direct consequence of conditional independence of clusters given $\bm\vartheta$.
The \loco{} predictive density then collapses to a multivariate harmonic-mean cluster-level \emph{conditional predictive ordinate} \citep[CPO;][]{gelfand1992model} computable from quantities that an encompassing Laplace-approximated fit in \texttt{INLAvaan} already produces.

\item The harmonic-mean expectation itself admits a fully exponential Laplace approximation \parencite{tierney1986accurate}, replacing a variance-pathological Monte Carlo estimator with a deterministic value of zero variance and controlled bias.
First- and second-order forms are available, differing in asymptotic accuracy.
\end{enumerate}

The same Laplace summary supports structure selection.
Candidate single-path restrictions become linear constraints, and the restricted Laplace summary follows from closed-form Gaussian conditioning, a rank-one update when a single path is restricted.
Every submodel nested within the encompassing model is therefore scored from the same fit (Section~\ref{sec:restrictions}), at a cost far below refitting each one in turn.
The approximations developed here are efficient enough to enumerate a moderately sized structural hypothesis space exhaustively, rather than search it.

\section{Multilevel Structural Equation Models and the Prediction Target}
\label{sec:setting}

We work throughout with continuous indicators and Gaussian latent variables, the regime in which the closed-form derivations below hold exactly.
A multilevel SEM is assembled from a measurement equation and a structural equation written twice, once for the within-cluster component and once for the between-cluster component.
We therefore fix notation first for the single-level model which supplies that pair, and build the two-level version from it.
Section~\ref{sec:loso-main} describes the \loso{} target as a special case, appropriate for predicting a new subject in an observed cluster.

For subjects $s = 1, \dots, N$ and indicators $i = 1, \dots, p$, the measurement model relates a $p$-vector of observed responses $\mathbf{y}_s \in \R^p$ to a $q$-vector of latent constructs $\bm\eta_s \in \R^q$ via
\begin{equation}
\mathbf{y}_s = \bm\nu + \bm\Lambda \bm\eta_s + \bm\epsilon_s,
\qquad \bm\epsilon_s \sim \mathcal{N}_p(0, \bm\Theta),
\label{eq:meas-single}
\end{equation}
with intercept $\bm\nu \in \R^p$, factor loadings $\bm\Lambda \in \R^{p \times q}$, and residual covariance $\bm\Theta \in \R^{p \times p}$ (typically diagonal).
The structural model relates latent constructs to one another and to a constant:
\begin{equation}
\bm\eta_s = \bm\alpha + \mathbf{B} \bm\eta_s + \bm\zeta_s,
\qquad \bm\zeta_s \sim \mathcal{N}_q(0, \bm\Psi),
\label{eq:struc-single}
\end{equation}
with $\bm\alpha \in \R^q$, $\mathbf{B} \in \R^{q \times q}$ the matrix of structural regression coefficients (the object of structure selection) and $\bm\Psi \in \R^{q \times q}$ the latent disturbance covariance.
We collect the SEM parameters as $\mathbb R^d \ni \bm\vartheta = \{\bm\nu, \bm\Lambda, \bm\Theta, \bm\alpha, \mathbf{B}, \bm\Psi\}$ and write $\pi(\bm\vartheta)$ for their prior.
The intercepts $\bm\nu$ and $\bm\alpha$ carry priors like any other element of $\bm\vartheta$, and the mean structure is estimated throughout this paper.
MCMC implementations do the same and supply intercepts whether or not the user asks for them \parencite{merkle2018blavaan}, which keeps the comparisons of Section~\ref{sec:sim} like for like.
This is a convention, not a restriction.
The means may instead be marginalised under a flat prior, in which case each unit is scored by its conditional density given the rest of the sample.
That kernel is Gaussian as well, and everything derived below applies to it unchanged.

Integrating out the latent state $\bm\eta_s$ yields $\mathbf{y}_s \mid \bm\vartheta \sim \mathcal{N}_p\big(\bm\mu(\bm\vartheta)\, , \bm\Sigma(\bm\vartheta)\big)$, the Gaussian reduced form, with
\begin{equation}
\bm\mu(\bm\vartheta) = \bm\nu + \bm\Lambda (\mathbf{I} - \mathbf{B})^{-1} \bm\alpha,
\qquad
\bm\Sigma(\bm\vartheta) = \bm\Lambda (\mathbf{I} - \mathbf{B})^{-1} \bm\Psi (\mathbf{I} - \mathbf{B})^{-\top} \bm\Lambda^{\top} + \bm\Theta.
\label{eq:reduced-single}
\end{equation}
This reduced form makes explicit that the structural matrix $\mathbf{B}$ acts on the data only through the convolution $(\mathbf{I} - \mathbf{B})^{-1} \bm\Psi (\mathbf{I} - \mathbf{B})^{-\top}$, the same operation that, applied once at each level, underlies the cluster reduced form.
In the multilevel case, subjects $s = 1, \dots, n_j$ are nested in clusters $j = 1, \dots, J$, and we adopt the standard within-between decomposition \autocite{muthen1994multilevel}
\begin{equation*}
\mathbf{y}_{sj} = \bm\nu + \mathbf{y}_j^{\mathrm{b}} + \mathbf{y}_{sj}^{\mathrm{w}},
\end{equation*}
in which the within-cluster component $\mathbf{y}_{sj}^{\mathrm{w}}$ and the between-cluster component $\mathbf{y}_j^{\mathrm{b}}$ are each governed by a measurement-and-structural model of the form \eqref{eq:meas-single}--\eqref{eq:struc-single}.
At the within level,
\begin{equation*}
\mathbf{y}_{sj}^{\mathrm{w}} = \bm\Lambda^{\mathrm{w}} \bm\eta_{sj}^{\mathrm{w}} + \bm\epsilon_{sj}^{\mathrm{w}}, \quad \bm\epsilon_{sj}^{\mathrm{w}} \sim \mathcal{N}_p(0, \bm\Theta^{\mathrm{w}}),
\qquad
\bm\eta_{sj}^{\mathrm{w}} = \mathbf{B}^{\mathrm{w}} \bm\eta_{sj}^{\mathrm{w}} + \bm\zeta_{sj}^{\mathrm{w}}, \quad \bm\zeta_{sj}^{\mathrm{w}} \sim \mathcal{N}_{q_{\mathrm{w}}}(0, \bm\Psi^{\mathrm{w}}),
\end{equation*}
with within-level loadings $\bm\Lambda^{\mathrm{w}} \in \R^{p \times q_{\mathrm{w}}}$, within-level latent state $\bm\eta_{sj}^{\mathrm{w}} \in \R^{q_{\mathrm{w}}}$, structural matrix $\mathbf{B}^{\mathrm{w}} \in \R^{q_{\mathrm{w}} \times q_{\mathrm{w}}}$, and disturbance covariance $\bm\Psi^{\mathrm{w}}$.
The within-level latent state carries no intercept, since within-cluster latent variables are deviations from their cluster means and so have mean zero by construction.
At the between level,
\begin{equation}
\mathbf{y}_j^{\mathrm{b}} = \bm\Lambda^{\mathrm{b}} \bm\eta_j^{\mathrm{b}} + \bm\epsilon_j^{\mathrm{b}}, \quad \bm\epsilon_j^{\mathrm{b}} \sim \mathcal{N}_p(0, \bm\Theta^{\mathrm{b}}),
\qquad
\bm\eta_j^{\mathrm{b}} = \bm\alpha^{\mathrm{b}} + \mathbf{B}^{\mathrm{b}} \bm\eta_j^{\mathrm{b}} + \bm\zeta_j^{\mathrm{b}}, \quad \bm\zeta_j^{\mathrm{b}} \sim \mathcal{N}_{q_{\mathrm{b}}}(0, \bm\Psi^{\mathrm{b}}),
\label{eq:between-model}
\end{equation}
with between-level loadings $\bm\Lambda^{\mathrm{b}} \in \R^{p \times q_{\mathrm{b}}}$, between-level latent state $\bm\eta_j^{\mathrm{b}} \in \R^{q_{\mathrm{b}}}$, structural intercept $\bm\alpha^{\mathrm{b}} \in \R^{q_{\mathrm{b}}}$, structural matrix $\mathbf{B}^{\mathrm{b}} \in \R^{q_{\mathrm{b}} \times q_{\mathrm{b}}}$, and disturbance covariance $\bm\Psi^{\mathrm{b}}$.
As in the single-level case, the matrices $\mathbf{B}^{\mathrm{w}}$ and $\mathbf{B}^{\mathrm{b}}$ collect the structural regression coefficients.
The $\bm\vartheta$ parameters are now expanded to include the within- and the between-cluster parameters $\{\bm\nu, \bm\Lambda^{\mathrm{w}}, \mathbf{B}^{\mathrm{w}}, \bm\Psi^{\mathrm{w}}, \bm\Theta^{\mathrm{w}}, \bm\Lambda^{\mathrm{b}}, \bm\alpha^{\mathrm{b}}, \mathbf{B}^{\mathrm{b}}, \bm\Psi^{\mathrm{b}}, \bm\Theta^{\mathrm{b}}\}$.
In a similar reduced-form fashion, the respective corresponding model-implied covariances can be formed as
\begin{equation}
\begin{aligned}
\bm\Sigma_{\mathrm{w}}(\bm\vartheta) &= \bm\Lambda^{\mathrm{w}} (\mathbf{I} - \mathbf{B}^{\mathrm{w}})^{-1} \bm\Psi^{\mathrm{w}} (\mathbf{I} - \mathbf{B}^{\mathrm{w}})^{-\top} (\bm\Lambda^{\mathrm{w}})^\top + \bm\Theta^{\mathrm{w}}, \\
\bm\Sigma_{\mathrm{b}}(\bm\vartheta) &= \bm\Lambda^{\mathrm{b}} (\mathbf{I} - \mathbf{B}^{\mathrm{b}})^{-1} \bm\Psi^{\mathrm{b}} (\mathbf{I} - \mathbf{B}^{\mathrm{b}})^{-\top} (\bm\Lambda^{\mathrm{b}})^\top + \bm\Theta^{\mathrm{b}},
\end{aligned}
\label{eq:Sigma-wb}
\end{equation}
both $p \times p$, while the model-implied mean as $\bm\mu(\bm\vartheta) = \bm\nu + \bm\Lambda^{\mathrm{b}} (\mathbf{I} - \mathbf{B}^{\mathrm{b}})^{-1} \bm\alpha^{\mathrm{b}}$, to which the within level contributes nothing.
Stacking the observations within a cluster as $\mathbf{y}_j = (\mathbf{y}_{1j}^\top, \dots, \mathbf{y}_{n_j j}^\top)^\top \in \R^{n_j p}$, the \emph{cluster reduced form} derived by \textcite{rosseel2021evaluating} is
\begin{equation}
\mathbf{y}_j \mid \bm\vartheta \sim \mathcal{N}_{n_j p}\bigl(\one_{n_j} \otimes \bm\mu(\bm\vartheta),\, \mathbf{V}_j(\bm\vartheta)\bigr),
\qquad
\mathbf{V}_j(\bm\vartheta) = \mathbf{I}_{n_j} \otimes \bm\Sigma_{\mathrm{w}} + \mathbf{J}_{n_j} \otimes \bm\Sigma_{\mathrm{b}},
\label{eq:cluster-reduced}
\end{equation}
with $\mathbf{V}_j(\bm\vartheta) \in \R^{n_j p \times n_j p}$ and $\mathbf{J}_{n_j} = \one_{n_j} \one_{n_j}^\top$.
The cluster covariance is compound-symmetric in block form, with $\bm\Sigma_{\mathrm{w}}$ on the diagonal of each unit's block and the shared between-cluster effect $\bm\eta_j^{\mathrm{b}}$ adding $\bm\Sigma_{\mathrm{b}}$ to every cross-unit block within the cluster.
The structural matrices $\mathbf{B}^{\mathrm{w}}$ and $\mathbf{B}^{\mathrm{b}}$ are confined entirely to the maps $\bm\vartheta \mapsto \bm\Sigma_{\mathrm{w}}$ and $\bm\vartheta \mapsto \bm\Sigma_{\mathrm{b}}$ in \eqref{eq:Sigma-wb} and to $\bm\vartheta \mapsto \bm\mu$.
The clusters are, moreover, conditionally independent given $\bm\vartheta$,
\begin{equation}
p(\mathbf{y} \mid \bm\vartheta) = \prod_{j=1}^J p(\mathbf{y}_j \mid \bm\vartheta),
\label{eq:cluster-indep}
\end{equation}
which will underwrite the cluster-deleted posterior identity of Section~\ref{sec:posterior}.

Evaluated naively, the $n_j p$-dimensional Gaussian \eqref{eq:cluster-reduced} costs $O\bigl((n_j p)^3\bigr)$ per cluster; but as \textcite{rosseel2021evaluating} notes, the compound-symmetric structure of $\mathbf{V}_j$ removes the dependence on cluster size, since $\mathbf{V}_j$ decomposes over the within- and between-cluster subspaces spanned by the orthogonal projectors $\mathbf{I}_{n_j} - n_j^{-1} \mathbf{J}_{n_j}$ and $n_j^{-1} \mathbf{J}_{n_j}$. 
Every log-determinant, inverse, and quadratic form then reduces to operations on the two $p \times p$ matrices $\bm\Sigma_{\mathrm{w}}$ and $\bm\Sigma_{\mathrm{w}} + n_j \bm\Sigma_{\mathrm{b}}$.
With $\bar{\mathbf{y}}_j = n_j^{-1} \sum_{i=1}^{n_j} \mathbf{y}_{ij}$ the cluster mean and $\mathbf{S}_j = \sum_{i=1}^{n_j} (\mathbf{y}_{ij} - \bar{\mathbf{y}}_j)(\mathbf{y}_{ij} - \bar{\mathbf{y}}_j)^\top$ the within-cluster scatter matrix, the Gaussian log-likelihood splits along the same two subspaces,
\begin{equation}
\log p(\mathbf{y}_j \mid \bm\vartheta) = -\tfrac{1}{2}\Bigl[\, n_j p \log(2\pi) + \log \det \mathbf{V}_j + Q_j^{\mathrm{w}}(\bm\vartheta) + Q_j^{\mathrm{b}}(\bm\vartheta) \,\Bigr],
\label{eq:cluster-loglik}
\end{equation}
with $\log \det \mathbf{V}_j = (n_j - 1) \log \det \bm\Sigma_{\mathrm{w}} + \log \det\bigl(\bm\Sigma_{\mathrm{w}} + n_j \bm\Sigma_{\mathrm{b}}\bigr)$ and within- and between-cluster quadratic forms
\begin{equation*}
Q_j^{\mathrm{w}}(\bm\vartheta) = \tr\bigl(\bm\Sigma_{\mathrm{w}}^{-1} \mathbf{S}_j\bigr),
\qquad
Q_j^{\mathrm{b}}(\bm\vartheta) = n_j (\bar{\mathbf{y}}_j - \bm\mu)^\top \bigl(\bm\Sigma_{\mathrm{w}} + n_j \bm\Sigma_{\mathrm{b}}\bigr)^{-1} (\bar{\mathbf{y}}_j - \bm\mu).
\end{equation*}
The data enter only through the sufficient statistics $(\bar{\mathbf{y}}_j, \mathbf{S}_j, n_j)$, formed once before inference, and each subsequent evaluation of the cluster log-likelihood $\log p(\mathbf{y}_j \mid \bm\vartheta)$ then costs $O(p^3)$ for two matrix factorisations and $O(n_j p^2)$ for the quadratic forms, with the cubic dependence on cluster size eliminated.
The decomposition presumes that every coordinate varies within clusters, so that $\bm\Sigma_{\mathrm{w}}$ is positive definite.
A cluster-constant coordinate occupies a null block of $\bm\Sigma_{\mathrm{w}}$ and calls instead for the general two-level Gaussian likelihood of \textcite[Eqs. 21--24]{rosseel2021evaluating}, of which \eqref{eq:cluster-loglik} is the special case with no between-only coordinates.

Exogenous covariates are accommodated by modelling them jointly with the indicators, in the manner of the all-$\mathbf{y}$ parameterisation \parencite{bollen1989structural}.
That is, a covariate $\mathbf{x}_{sj}$ is appended to the response vector, so its mean, variance, and covariances become free parameters, and the cluster reduced form \eqref{eq:cluster-reduced} and the conditional independence \eqref{eq:cluster-indep} continue to hold with $p$ enlarged to include the covariates.
Covariates that vary within clusters enter both $\bm\Sigma_{\mathrm{w}}$ and $\bm\Sigma_{\mathrm{b}}$ and require nothing new, whereas cluster-level covariates are constant within a cluster, contribute no within-cluster variation, and enter only the between covariance $\bm\Sigma_{\mathrm{b}}$, a case handled by the general two-level Gaussian likelihood.
The alternative convention of conditioning on the covariates induces a different predictive estimand, set out in Appendix~\ref{app:covariates} together with the exact relation between the two.

For multilevel SEM, observation-level LOO can be cast as either of two distinct prediction tasks, differing only in the distribution against which the cluster random effect $\bm\eta_j^{\mathrm{b}}$ is integrated.
The \emph{conditional} (\loso{}) target predicts a new subject in an observed cluster: $\bm\eta_j^{\mathrm{b}}$ is integrated against its posterior $p(\bm\eta_j^{\mathrm{b}} \mid \mathbf{y}_{j,-s}, \bm\vartheta)$, so the held-out subject's own clustermates inform the cluster effect and the prediction borrows their strength,
\begin{equation}
p(\mathbf{y}_{sj} \mid \mathbf{y}_{j,-s}, \mathbf{y}_{-j}) = \iint p(\mathbf{y}_{sj} \mid \bm\eta_j^{\mathrm{b}}, \bm\vartheta)\, p(\bm\eta_j^{\mathrm{b}} \mid \mathbf{y}_{j,-s}, \bm\vartheta)\, d\bm\eta_j^{\mathrm{b}}\, \pi(\bm\vartheta \mid \mathbf{y}_{-(sj)})\, d\bm\vartheta.
\label{eq:loso-target}
\end{equation}
The \emph{marginal} (\loco{}) target predicts an entirely new cluster: there are no clustermates to estimate $\bm\eta_j^{\mathrm{b}}$ from, so it is integrated against its prior, the marginal latent distribution $\mathcal{N}_{q_{\mathrm{b}}}\bigl((\mathbf{I} - \mathbf{B}^{\mathrm{b}})^{-1}\bm\alpha^{\mathrm{b}},\, (\mathbf{I} - \mathbf{B}^{\mathrm{b}})^{-1} \bm\Psi^{\mathrm{b}} (\mathbf{I} - \mathbf{B}^{\mathrm{b}})^{-\top}\bigr)$ implied by the between-level structural model \eqref{eq:between-model},
\begin{equation}
p(\mathbf{y}_j \mid \mathbf{y}_{-j}) = \iint p(\mathbf{y}_j \mid \bm\eta_j^{\mathrm{b}}, \bm\vartheta)\, p(\bm\eta_j^{\mathrm{b}} \mid \bm\vartheta)\, d\bm\eta_j^{\mathrm{b}}\, \pi(\bm\vartheta \mid \mathbf{y}_{-j})\, d\bm\vartheta = \int p(\mathbf{y}_j \mid \bm\vartheta)\, \pi(\bm\vartheta \mid \mathbf{y}_{-j})\, d\bm\vartheta.
\label{eq:loco-target}
\end{equation}
The second equality uses the cluster reduced form \eqref{eq:cluster-reduced}, in which this latent integral has already been carried out.
The choice between the two is not a matter of taste but of which generalisation question is being asked.
Because subjects within a cluster are correlated through the shared effect $\bm\eta_j^{\mathrm{b}}$, leaving out a single subject leaves that subject's cluster effect still estimable from the rest of the cluster.
\loso{} therefore rewards a model for fitting cluster-specific idiosyncrasy, which does not transfer to a new population.
Leaving out an entire cluster removes every observation that identifies $\bm\eta_j^{\mathrm{b}}$, so \loco{} rewards only population-level structure that generalises across clusters.
We therefore take \eqref{eq:loco-target} as our target and summarise it by the summed log-predictive score
\begin{equation*}
\elpd_{\loco} = \sum_{j=1}^J \log p(\mathbf{y}_j \mid \mathbf{y}_{-j}),
\end{equation*}
with larger values indicating better generalisation to a new cluster.

\section{Leave-One-Cluster-Out From a Single Model Fit}
\label{sec:posterior}

The \loco{} target \eqref{eq:loco-target} is an integral against the cluster-deleted posterior $\pi(\bm\vartheta \mid \mathbf{y}_{-j})$, and computing that posterior directly by refitting the model for each of the $J$ clusters is precisely the computational bottleneck this framework eliminates.
The next two subsections remove it, reducing the target to one expectation against the full-data posterior and then evaluating that expectation in closed form.
The two after them repeat the pattern one level down, for a candidate submodel scored by conditioning the same Laplace summary rather than by its own fit.

\subsection{Exact Cluster-Deletion Identity}
\label{sec:harmonic}

As seen from \eqref{eq:cluster-indep}, cluster $j$ enters the full-data likelihood through a single factor, so deleting it reweights the full posterior rather than requiring a fresh fit.
The following well-known proposition bypasses these refits entirely, collapsing \eqref{eq:loco-target} into a single expectation against the \emph{full-data} posterior.

\begin{proposition}[\loco{} as a posterior harmonic mean]
\label{prop:harmonic-mean}
If the full-data likelihood factorises at cluster $j$ as $p(\mathbf{y} \mid \bm\vartheta) = p(\mathbf{y}_j \mid \bm\vartheta)\, p(\mathbf{y}_{-j} \mid \bm\vartheta)$, as conditional independence of the clusters \eqref{eq:cluster-indep} guarantees, then for every cluster $j$ the \loco{} predictive density \eqref{eq:loco-target} satisfies the exact identity
\begin{equation}
p(\mathbf{y}_j \mid \mathbf{y}_{-j}) = \Bigl(\, \E_{\bm\vartheta \mid \mathbf{y}} \bigl[\, 1 / p(\mathbf{y}_j \mid \bm\vartheta) \,\bigr] \,\Bigr)^{-1},
\label{eq:harmonic-mean}
\end{equation}
where the expectation is taken against the full-data posterior $\pi(\bm\vartheta \mid \mathbf{y})$.
\end{proposition}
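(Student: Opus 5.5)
The plan is to express the cluster-deleted posterior as a reweighting of the full-data posterior via Bayes' theorem, and then substitute that into the \loco{} target \eqref{eq:loco-target}. First, write both posteriors with their normalising constants:
\[
\pi(\bm\vartheta \mid \mathbf{y}) = \frac{p(\mathbf{y}_j \mid \bm\vartheta)\, p(\mathbf{y}_{-j} \mid \bm\vartheta)\, \pi(\bm\vartheta)}{p(\mathbf{y})},
\qquad
\pi(\bm\vartheta \mid \mathbf{y}_{-j}) = \frac{p(\mathbf{y}_{-j} \mid \bm\vartheta)\, \pi(\bm\vartheta)}{p(\mathbf{y}_{-j})}.
\]
The first display uses the assumed factorisation at cluster $j$. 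Wherever $p(\mathbf{y}_j \mid \bm\vartheta) > 0$, dividing one by the other gives the exact reweighting identity
\[
\pi(\bm\vartheta \mid \mathbf{y}_{-j}) = \frac{p(\mathbf{y})}{p(\mathbf{y}_{-j})}\, \frac{\pi(\bm\vartheta \mid \mathbf{y})}{p(\mathbf{y}_j \mid \bm\vartheta)}.
\]
Under the cluster reduced form \eqref{eq:cluster-reduced} this positivity holds everywhere: the cluster density is a nondegenerate Gaussian whenever $\mathbf{V}_j$ is positive definite, which is the case on the parameter space where $\bm\Sigma_{\mathrm{w}}$ is positive definite.

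Second, extract the normalising ratio by integrating both sides over $\bm\vartheta$. The left side integrates to one, so
\[
\frac{p(\mathbf{y}_{-j})}{p(\mathbf{y})} = \E_{\bm\vartheta \mid \mathbf{y}}\bigl[1/p(\mathbf{y}_j \mid \bm\vartheta)\bigr].
\]
Separately, the \loco{} predictive density \eqref{eq:loco-target} is just the ratio of marginal likelihoods:
\[
\int p(\mathbf{y}_j \mid \bm\vartheta)\, \pi(\bm\vartheta \mid \mathbf{y}_{-j})\, d\bm\vartheta = \frac{\int p(\mathbf{y} \mid \bm\vartheta)\pi(\bm\vartheta)\,d\bm\vartheta}{p(\mathbf{y}_{-j})} = \frac{p(\mathbf{y})}{p(\mathbf{y}_{-j})}.
\]
This step again uses the factorisation. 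Combining the two displays yields \eqref{eq:harmonic-mean}.

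The only delicate step is integrability and finiteness rather than algebra. The expectation of $1/p(\mathbf{y}_j \mid \bm\vartheta)$ under the full posterior is finite precisely when $p(\mathbf{y}_{-j}) < \infty$, that is, when the cluster-deleted posterior is proper. I would therefore state this as the standing assumption, which is implicit in \eqref{eq:loco-target} being well defined, and check that the computation above proves the equivalence. Tonelli's theorem justifies all interchanges because every integrand is nonnegative. With that in hand the identity is exact, and no approximation enters until the harmonic-mean expectation is evaluated later.
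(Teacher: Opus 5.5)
Your argument is correct and is essentially the paper's. Both rest on the Bayes reweighting $\pi(\bm\vartheta \mid \mathbf{y}_{-j}) \propto \pi(\bm\vartheta \mid \mathbf{y})/p(\mathbf{y}_j \mid \bm\vartheta)$ with normalising constant $\E_{\bm\vartheta \mid \mathbf{y}}[1/p(\mathbf{y}_j \mid \bm\vartheta)]$: the paper substitutes this reweighting into \eqref{eq:loco-target} and cancels the likelihood factor, whereas you pass through the marginal-likelihood ratio $p(\mathbf{y})/p(\mathbf{y}_{-j})$, and you also state explicitly the positivity ($p(\mathbf{y}_j \mid \bm\vartheta) > 0$) and properness ($p(\mathbf{y}_{-j}) < \infty$) conditions that the paper's proof leaves implicit.
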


\begin{proof}
By the assumed factorisation, Bayes' theorem immediately yields the cluster-deleted posterior as a direct reweighting of the full posterior,
\begin{equation}
\pi(\bm\vartheta \mid \mathbf{y}_{-j}) \propto  \frac{\pi(\bm\vartheta \mid \mathbf{y})}{p(\mathbf{y}_j \mid \bm\vartheta)},
\label{eq:cluster-deleted-id}
\end{equation}
with normalising constant $Z_j := \E_{\bm\vartheta \mid \mathbf{y}} \left[ \frac{1}{p(\mathbf{y}_j \mid \bm\vartheta)} \right]$. 
Normalising and substituting \eqref{eq:cluster-deleted-id} into the predictive target \eqref{eq:loco-target} yields
\begin{equation*}
p(\mathbf{y}_j \mid \mathbf{y}_{-j})
= \int \cancel{p(\mathbf{y}_j \mid \bm\vartheta)} \frac{1}{Z_j} \frac{\pi(\bm\vartheta \mid \mathbf{y})}{\cancel{p(\mathbf{y}_j \mid \bm\vartheta)}} \, d\bm\vartheta
= \frac{1}{Z_j} \int \pi(\bm\vartheta \mid \mathbf{y})\, d\bm\vartheta
= Z_j^{-1},
\end{equation*}
the last step using the fact that $\pi(\bm\vartheta \mid \mathbf{y})$ integrates to one. This is \eqref{eq:harmonic-mean}.
\end{proof}

The right-hand quantity in \eqref{eq:harmonic-mean} is the multivariate cluster-level CPO, and Proposition~\ref{prop:harmonic-mean} delivers this target in a tractable form.
Every cluster-deleted refit is gone; the right-hand side is a functional of the single posterior $\pi(\bm\vartheta \mid \mathbf{y})$, so one encompassing fit scores all $J$ clusters at once.
The reduction is also exact and analytic.
For comparison, the leave-group-out construction of \textcite{liu2025leavegroupout} for general latent Gaussian models targets group likelihoods that are typically non-Gaussian, and so retains an inner numerical integration over the latent state.
Here, the Gaussian cluster reduced form \eqref{eq:cluster-reduced} makes that inner integral closed-form, and Proposition~\ref{prop:harmonic-mean} collapses the outer integral over $\bm\vartheta$ as well.
The \loco{} score for the multilevel Gaussian SEM is therefore both refit-free and quadrature-free.

What remains is to evaluate the posterior expectation in \eqref{eq:harmonic-mean}.
Even if posterior samples are readily available, either as a by-product of MCMC or from the Gaussian copula draws that \texttt{INLAvaan} supplies \autocite{jamil2026implementation}, a direct Monte Carlo average over them must be avoided.
This is because such an average is the cluster-level counterpart of the harmonic-mean estimator of the marginal likelihood, and inherits its variance pathology.
The integrand $1/p(\mathbf{y}_j \mid \bm\vartheta)$ has heavy tails wherever the cluster likelihood is small over a region of non-negligible posterior mass, and the resulting Monte Carlo estimator can have infinite variance \autocite{wolpert2012alpha}.
The next section instead evaluates \eqref{eq:harmonic-mean} deterministically, by a Laplace approximation that has zero variance by construction and so sidesteps the pathology entirely, at a cost far below that of stabilising a Monte Carlo estimate.

\subsection{A Deterministic Laplace Approximation of the Cluster CPO}
\label{sec:laplace}

We turn to the expectation, approximating it by a fully exponential Laplace approximation carried out with the cluster log-likelihood \emph{inside} the exponent.
Let $L(\bm\vartheta) = \log \pi(\bm\vartheta \mid \mathbf{y})$ and let $\mathcal{N}(\bm\vartheta^\ast, \bm\Omega)$ be its Gaussian summary, in which $\bm\vartheta^\ast$ is the centre at which the expansion is taken and $\bm\Omega^{-1} := -\nabla^2 L$ the curvature at the posterior mode.
For each cluster define the score and Hessian at that centre, respectively $\mathbf{s}_j := \nabla \ell_j(\bm\vartheta^\ast)$ and $\mathbf{H}_j := \nabla^2 \ell_j(\bm\vartheta^\ast)$, where $\ell_j(\bm\vartheta) := \log p(\mathbf{y}_j \mid \bm\vartheta)$ is the cluster log-likelihood.
Set $\bm\delta := \bm\vartheta - \bm\vartheta^\ast$ and $d := \dim \bm\vartheta$, and write
\begin{equation*}
I_j := \E_{\bm\vartheta \mid \mathbf{y}} \bigl[\, e^{-\ell_j(\bm\vartheta)} \,\bigr], \qquad \hat{p}(\mathbf{y}_j \mid \mathbf{y}_{-j}) := 1 / I_j,
\end{equation*}
which by Proposition~\ref{prop:harmonic-mean} is exact up to the approximation of the single expectation $I_j$.
The following theorem evaluates $I_j$ exactly under a Taylor truncation of $\ell_j$, taken to first order and to second.

\begin{theorem}[Laplace approximation of the cluster CPO]
\label{prop:laplace-1st}
\label{prop:laplace-2nd}
Assume that the posterior is replaced by its Gaussian summary, $\pi(\bm\vartheta \mid \mathbf{y}) = \mathcal{N}(\bm\vartheta^\ast, \bm\Omega)$ with $\bm\Omega \succ 0$.
Write $\hat{I}_j^{(k)}$ for $I_j$ with $\ell_j$ replaced by its $k$th-order Taylor expansion about $\bm\vartheta^\ast$, and set $\hat{p}^{(k)}(\mathbf{y}_j \mid \mathbf{y}_{-j}) := 1 / \hat{I}_j^{(k)}$. Then:
\begin{enumerate}
\item[(i)] \textbf{\emph{First order.}} $\hat{I}_j^{(1)}$ is finite for every finite $\mathbf{s}_j$, and
\begin{equation}
\log \hat{p}^{(1)}(\mathbf{y}_j \mid \mathbf{y}_{-j}) =
\ell_j(\bm\vartheta^\ast) - \tfrac{1}{2}\, \mathbf{s}_j^\top \bm\Omega \mathbf{s}_j.
\label{eq:laplace-1st}
\end{equation}
\item[(ii)] \textbf{\emph{Second order.}} With $\mathbf{A}_j := \bm\Omega^{-1} + \mathbf{H}_j$, $\hat{I}_j^{(2)}$ is finite if and only if
\begin{equation}
\mathbf{A}_j \succ 0
\quad \Longleftrightarrow \quad
\lambda_{\max}\bigl(-\bm\Omega^{1/2} \mathbf{H}_j \bm\Omega^{1/2}\bigr) < 1,
\label{eq:existence-2nd}
\end{equation}
and in that case
\begin{equation}
\log \hat{p}^{(2)}(\mathbf{y}_j \mid \mathbf{y}_{-j}) =
\ell_j(\bm\vartheta^\ast) - \tfrac{1}{2}\, \mathbf{s}_j^\top \mathbf{A}_j^{-1} \mathbf{s}_j + \tfrac{1}{2}\, \log \bigl| \mathbf{I} + \bm\Omega \mathbf{H}_j \bigr|.
\label{eq:laplace-2nd}
\end{equation}
\end{enumerate}
Part (i) is the case $\mathbf{H}_j = 0$ of (ii).
\end{theorem}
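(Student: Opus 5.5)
The plan is to evaluate the Gaussian integral defining $\hat{I}_j^{(2)}$ directly by completing the square, and to obtain (i) by setting $\mathbf{H}_j = 0$. Substituting the second-order truncation $\ell_j(\bm\vartheta) \approx \ell_j(\bm\vartheta^\ast) + \mathbf{s}_j^\top\bm\delta + \tfrac12 \bm\delta^\top \mathbf{H}_j \bm\delta$ into $I_j$ and using the Gaussian summary gives
\begin{equation*}
\hat{I}_j^{(2)} = e^{-\ell_j(\bm\vartheta^\ast)}\, (2\pi)^{-d/2} |\bm\Omega|^{-1/2} \int_{\R^d} \exp\Bigl\{-\mathbf{s}_j^\top\bm\delta - \tfrac12 \bm\delta^\top (\bm\Omega^{-1} + \mathbf{H}_j)\bm\delta\Bigr\}\, d\bm\delta .
\end{equation*}
The integrand is thus the exponential of a quadratic with matrix $\mathbf{A}_j$. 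This is the whole argument. The remaining steps are the finiteness characterisation, the closed-form value, and the rewriting of the determinant.

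\emph{Existence.} If $\mathbf{A}_j \succ 0$, the integral is a standard finite Gaussian integral. Conversely, suppose $\mathbf{A}_j$ has an eigenvector $\mathbf{v}$ with eigenvalue $\lambda \le 0$. Restrict to the half-line $\bm\delta = t\mathbf{v}$ with $t$ chosen so that $-t\,\mathbf{s}_j^\top\mathbf{v} \ge 0$, and consider a tubular neighbourhood around that line. Along it the exponent is bounded below: it is nondecreasing when $\lambda<0$, and when $\lambda=0$ the linear term has the favourable sign (or vanishes). So the integrand does not decay along an unbounded set of infinite measure, and the integral diverges. The care needed for a uniform lower bound on the tube, where transverse directions with positive curvature only contribute a bounded factor, is the one step requiring attention. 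I expect it to be the main (mild) obstacle, and I would handle it by diagonalising $\mathbf{A}_j$ and applying Fubini coordinatewise, so that the integral factorises into one-dimensional integrals, one of which is $\int e^{-c t - \lambda t^2/2}\,dt = \infty$.

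The equivalence in \eqref{eq:existence-2nd} follows from the congruence $\bm\Omega^{1/2}\mathbf{A}_j\bm\Omega^{1/2} = \mathbf{I} + \bm\Omega^{1/2}\mathbf{H}_j\bm\Omega^{1/2}$. By Sylvester's law of inertia, this is positive definite if and only if every eigenvalue of $-\bm\Omega^{1/2}\mathbf{H}_j\bm\Omega^{1/2}$ is below $1$.

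\emph{Value.} When $\mathbf{A}_j\succ0$, complete the square: $-\mathbf{s}_j^\top\bm\delta - \tfrac12\bm\delta^\top\mathbf{A}_j\bm\delta = -\tfrac12(\bm\delta + \mathbf{A}_j^{-1}\mathbf{s}_j)^\top\mathbf{A}_j(\bm\delta+\mathbf{A}_j^{-1}\mathbf{s}_j) + \tfrac12\mathbf{s}_j^\top\mathbf{A}_j^{-1}\mathbf{s}_j$. This gives
\begin{equation*}
\hat{I}_j^{(2)} = e^{-\ell_j(\bm\vartheta^\ast) + \frac12 \mathbf{s}_j^\top\mathbf{A}_j^{-1}\mathbf{s}_j}\, |\bm\Omega|^{-1/2}|\mathbf{A}_j|^{-1/2}.
\end{equation*}
Since $|\bm\Omega||\mathbf{A}_j| = |\bm\Omega\mathbf{A}_j| = |\mathbf{I}+\bm\Omega\mathbf{H}_j|$, taking $-\log$ yields \eqref{eq:laplace-2nd}. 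Setting $\mathbf{H}_j=0$ gives $\mathbf{A}_j=\bm\Omega^{-1}\succ0$ automatically, so finiteness holds for every finite $\mathbf{s}_j$, $\mathbf{A}_j^{-1}=\bm\Omega$, and the log-determinant vanishes. This is \eqref{eq:laplace-1st}; equivalently, it is the Gaussian moment generating function $\E[e^{-\mathbf{s}_j^\top\bm\delta}] = e^{\frac12\mathbf{s}_j^\top\bm\Omega\mathbf{s}_j}$.
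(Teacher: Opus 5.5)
Your proposal is correct and follows essentially the paper's proof: substitute the truncation to get a Gaussian integral with precision $\mathbf{A}_j$, characterise finiteness by diagonalising $\mathbf{A}_j$ and factorising into scalar integrals, transfer positive definiteness by congruence with $\bm\Omega^{1/2}$, and complete the square. The paper uses that same factorisation, so your tubular-neighbourhood detour can be dropped; strictly it is Tonelli rather than Fubini that licenses the factorisation when one factor diverges. Your determinant step $|\bm\Omega||\mathbf{A}_j| = |\bm\Omega\mathbf{A}_j| = |\mathbf{I}+\bm\Omega\mathbf{H}_j|$ also avoids the Sylvester identity the paper invokes, since $\bm\Omega\mathbf{A}_j = \mathbf{I}+\bm\Omega\mathbf{H}_j$ directly.
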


The proof is given in Appendix~\ref{app:laplace-proof}.
Under the Gaussian summary, then, the fully exponential Laplace approximation of the cluster CPO is nothing more than a Taylor truncation of $\ell_j$, and we refer to $\log \hat{p}^{(1)}$ and $\log \hat{p}^{(2)}$ as the first- and second-order \emph{Taylor elpd scores} throughout.
The second-order form differs from \eqref{eq:laplace-1st} through a single precision update $\bm\Omega^{-1}\mapsto \mathbf{A}_j = \bm\Omega^{-1}+\mathbf{H}_j$ and adds a determinant term for the resulting change in normalising volume.
These adjustments have a direct deletion interpretation.
Removing cluster $j$ subtracts $\ell_j$ from the full log posterior, changing its negative Hessian at $\bm\vartheta^\ast$ to the full-data precision less the observed information $-\mathbf{H}_j$ contributed by cluster $j$, so $\mathbf A_j$ is interpreted as the approximate leave-$j$-out precision.
Accordingly, $\bm\vartheta^\ast-\mathbf{A}_j^{-1}\mathbf{s}_j$ is one analytic Newton step towards the cluster-deleted mode, i.e., the tilted mode of \textcite{tierney1986accurate}, with $\mathbf{A}_j$ the tilted precision, reached in a single step because of the Gaussian approximation to the posterior.
Whereas \textcite{tierney1986accurate} remaximise the cluster-deleted log posterior $L - \ell_j$ and evaluate its Hessian at the new mode, here the tilted precision $\mathbf{A}_j$ is fixed at $\bm\vartheta^\ast$, so the second-order score is their fully exponential approximation applied to the quadratic surrogates of $L$ and $L - \ell_j$.

Condition~\eqref{eq:existence-2nd}, $\mathbf{A}_j\succ0$, is therefore the existence condition for this local Gaussian representation, as
it makes $\mathbf{A}_j$ a valid precision matrix and ensures convergence of the Gaussian integral defining the second-order score.
Let $k_{ji}$ be the eigenvalues of $-\bm\Omega^{1/2}\mathbf{H}_j\bm\Omega^{1/2}$.
$\mathbf{A}_j\succ0$ holds exactly when $k_j:=\max_i k_{ji}<1$  for any $\mathbf{H}_j$, since $\mathbf{A}_j\succ0$ is equivalent to $\mathbf{I}+\bm\Omega^{1/2}\mathbf{H}_j\bm\Omega^{1/2}\succ0$, and the determinant adjustment becomes $\tfrac12\sum_i\log(1-k_{ji})$, an additional volume penalty whenever the $k_{ji}$ are non-negative, as they are when $\mathbf{H}_j\preceq0$. 
The condition $k_j < 1$ also governs the harmonic-mean pathology described earlier.
Since $I_j$ is exactly the posterior mean of the deleted-to-full density ratio $r_j(\bm\vartheta) \propto p(\mathbf{y}_j \mid \bm\vartheta)^{-1}$ in \eqref{eq:cluster-deleted-id}, the existence condition just derived is also the condition for that ratio to have finite mean.

Both statements in Theorem~\ref{prop:laplace-2nd} are exact, so the error in $\log \hat p^{(k)}$ is entirely the error of the Taylor truncation of $\ell_j$.
With $J \to \infty$ at $n_j$ and $p$ fixed, that error is $O(J^{-1})$ per cluster at first order (the discarded second-order expansion term) and $O(J^{-2})$ at second.
Thus, the two orders differ per cluster by $\tfrac12 \operatorname{tr}(-\bm\Omega \mathbf{H}_j) + O(J^{-2})$, the leading term arising from the determinant adjustment and equalling minus the posterior mean of the quadratic Taylor term the first-order truncation discards, the tilt of the quadratic form entering only at $O(J^{-2})$.
Now $\elpd_{\loco}$ sums $J$ of these, and provided the bounds hold uniformly in $j$, the summed second-order truncation error is $O(J^{-1})$ and vanishes as clusters accumulate, whereas the summed first-order error is $O(1)$ and does not.
That non-vanishing limit is the sum of the per-cluster gaps, since $\sum_j \mathbf{H}_j$ is the full-data Hessian $-\mathcal{I}$ and $\bm\Omega^{-1} = \mathcal{I} + \mathbf{P}$, with $\mathbf{P} := -\nabla^2 \log \pi(\bm\vartheta^\ast)$ the curvature the prior contributes there,
\begin{equation}
\sum_j \Bigl( \log \hat{p}^{(1)}(\mathbf{y}_j \mid \mathbf{y}_{-j}) - \log \hat{p}^{(2)}(\mathbf{y}_j \mid \mathbf{y}_{-j}) \Bigr) \;=\; \tfrac{1}{2} \operatorname{tr}\bigl( \bm\Omega \mathcal{I} \bigr) + O(J^{-1}),
\label{eq:peff-gap}
\end{equation}
the trace being the effective number of parameters $p_{\mathrm{eff}} := \operatorname{tr}(\bm\Omega \mathcal{I})$ familiar from the deviance information criterion \parencite[DIC;][]{spiegelhalter2002bayesian} and the widely applicable information criterion \parencite[WAIC;][]{watanabe2010asymptotic}.
It is the dimension $d$ less a prior shrinkage, in the manner of a ridge trace, since $p_{\mathrm{eff}} = d - \operatorname{tr}(\bm\Omega \mathbf{P}) \le d$ with equality only under a flat prior. 
That shrinkage is itself $O(J^{-1})$, as is the remainder in \eqref{eq:peff-gap}, so the summed gap approaches $d/2$ while differing from it at any finite $J$.
The first-order Taylor elpd therefore overstates $\elpd_{\loco}$ by a complexity penalty $\tfrac12 p_{\mathrm{eff}}$, and so comparing candidates of different dimension by the first-order score may be misleading.
The same per-cluster kernels also deliver the marginal WAIC, which targets this same leave-one-cluster-out quantity \parencite{merkle2019bayesian} and is ordinarily computed from posterior draws.
Under a Gaussian density it needs none, and at first order it reproduces \eqref{eq:laplace-1st} exactly, its penalty being the $p_{\mathrm{loo}}$ that the \texttt{loo} package estimates by PSIS \autocite{vehtari2017practical}.
Appendix~\ref{app:waic} gives the derivation.

The centre $\bm\vartheta^\ast$ need not be the posterior mode.
Theorem~\ref{prop:laplace-1st} requires only that $\ell_j$ be expanded at the centre of the Gaussian summary, so that $\bm\delta$ has mean zero under it.
The stationarity condition $\nabla L(\bm\vartheta^\ast) = 0$ is nowhere used, and any centre is admissible. 
Since $I_j$ is an expectation under the posterior, a centre near the posterior mean is preferable to the mode whenever the posterior is skewed. 
\texttt{INLAvaan} supplies a variationally mean-corrected centre \parencite{jamil2026approximate}, and it is that point at which we therefore expand, though $\bm\Omega$ retains the curvature at the mode.
Every remaining ingredient comes from that same fit.
The value $\ell_j(\bm\vartheta^\ast)$ is closed-form from \eqref{eq:cluster-loglik}, at $O(p^3)$ per cluster and independent of cluster size.
The score $\mathbf{s}_j$ follows by analytic differentiation of $\ell_j$ through the matrix calculus of $\bm\Sigma_{\mathrm{w}}(\bm\vartheta)$, $\bm\Sigma_{\mathrm{b}}(\bm\vartheta)$ and $\bm\mu(\bm\vartheta)$, and the Hessian $\mathbf{H}_j$, needed only for the second-order form, by finite differencing that analytic gradient.
The total cost is one \texttt{INLAvaan} fit followed by a single post-processing pass, with no refits and no posterior sampling.

\subsection{Submodels as Linear Constraints and the Compatible-Prior Identity}
\label{sec:restrictions}

The Laplace summary $\mathcal{N}(\bm\vartheta^\ast, \bm\Omega)$ of Section~\ref{sec:laplace} supports more than \loco{} scoring of the encompassing model. 
In the spirit of projection predictive inference \citep[e.g.,][]{piironen2020projective}, we bypass the prohibitive computational cost of combinatoric refitting by treating our fully parameterised encompassing model as an information-rich reference profile. 
A candidate structural restriction is then handled as a linear constraint on $\bm\vartheta$.
Under a compatibility convention on the submodel priors, the restricted Laplace summary follows from closed-form Gaussian conditioning, and the \loco{} score of the restricted model is a one-pass post-processing step. 
Evaluating a candidate submodel thus costs one conditioning update and one scoring pass, against one full \texttt{INLAvaan} fit for the encompassing model, so a candidate space of realistic size can be enumerated rather than searched.

To evaluate structural alternatives efficiently, we operate within a framework where the candidate space consists of candidate submodels nested within a single, unrestricted reference model. 
Let this encompassing model be parameterised by the full vector $\bm\vartheta$. Any specific candidate structural submodel $\mathcal{M}_m$ is then identified with the linear constraint $\mathbf{C}_m\bm\vartheta = 0$, in which $\mathbf{C}_m$ is a selector matrix whose rows pick out the path entries zeroed under $\mathcal{M}_m$.
Single-path additions or removals during structure selection thus correspond to the sequential removal or addition of a single row in $\mathbf{C}_m$, respectively.
The representation presumes that zeroing the candidate paths leaves the remaining model identified.
Restrictions that change identification status (e.g., deleting a loading that leaves a latent factor underdetermined) lie outside its scope and require a separate \texttt{INLAvaan} fit.
The constraint acts on $\bm\vartheta$ alone, and so composes with any deletion index, including subject-level leave-out.

We adopt throughout the \emph{compatible-prior convention} of \textcite{consonni2008compatibility}, taking each submodel's prior to be the encompassing prior conditioned on its own defining constraint,
\begin{equation}
\pi_m(\bm\vartheta) = \pi(\bm\vartheta \mid \mathbf{C}_m \bm\vartheta = 0).
\label{eq:compat}
\end{equation}
This is the same condition under which the Savage--Dickey density ratio is valid \parencite{dickey1971weighted,verdinelli1995computing} and that underlies the encompassing-prior approach to constrained model selection \parencite{klugkist2007bayes}.

Compatibility is automatic in the cases that arise here.
\texttt{INLAvaan} places independent priors on the parameters by default, so when $\pi(\bm\vartheta) = \prod_k \pi_k(\vartheta_k)$ and $\mathbf{C}_m$ is a selector matrix, conditioning on $\mathbf{C}_m\bm\vartheta = 0$ leaves the priors on the retained parameters untouched and \eqref{eq:compat} holds with $\pi_m$ their encompassing marginal prior.
The argument requires only that the constrained coordinates be prior-independent of the retained ones, so a joint prior on a block of parameters, such as one on a latent correlation matrix \autocite{freni2025graphical,freni2026informative}, is compatible as long as no coordinate inside that block is zeroed.
Since structure selection adds and removes individual paths, only selector constraints arise and no separate submodel prior need be elicited.

\begin{lemma}
\label{lem:submodel-id}
Partition $\bm\vartheta = (\vartheta_a, \bm\vartheta_{-a})$ and consider the single-path restriction $\vartheta_a = 0$.
Let $\mathcal{M}_m$ denote the submodel with $\vartheta_a$ fixed at zero, carrying prior $\pi_m(\bm\vartheta_{-a})$.
If the compatible-prior convention $\pi_m(\bm\vartheta_{-a}) = \pi(\bm\vartheta_{-a} \mid \vartheta_a = 0)$ holds, then for every dataset $\mathbf{y}$
\begin{equation}
\pi_m(\bm\vartheta_{-a} \mid \mathbf{y}) \;=\; \pi(\bm\vartheta_{-a} \mid \mathbf{y},\, \vartheta_a = 0).
\label{eq:submodel-id}
\end{equation}
The general linear restriction $\mathbf{C}_m \bm\vartheta = 0$ reduces to this case by an invertible linear reparameterisation mapping $\mathbf{C}_m \bm\vartheta$ onto a subvector of the parameters.
\end{lemma}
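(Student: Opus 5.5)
The plan is a direct Bayes' theorem computation on both sides of \eqref{eq:submodel-id}, using the fact that the submodel's likelihood is the encompassing likelihood evaluated at $\vartheta_a = 0$. Under $\mathcal{M}_m$, the sampling density of $\mathbf{y}$ given $\bm\vartheta_{-a}$ is $p(\mathbf{y} \mid \vartheta_a = 0, \bm\vartheta_{-a})$. This is because the submodel is the encompassing model with the path fixed, and the reduced forms \eqref{eq:Sigma-wb}--\eqref{eq:cluster-reduced} depend on $\bm\vartheta$ only through this evaluation. Hence
\begin{equation*}
\pi_m(\bm\vartheta_{-a} \mid \mathbf{y}) \propto p(\mathbf{y} \mid 0, \bm\vartheta_{-a})\, \pi_m(\bm\vartheta_{-a}) = p(\mathbf{y} \mid 0, \bm\vartheta_{-a})\, \frac{\pi(0, \bm\vartheta_{-a})}{\pi_a(0)},
\end{equation*}
where the last equality applies the compatibility convention together with the definition of the conditional density. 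The marginal prior density $\pi_a(0) := \int \pi(0, \bm\vartheta_{-a})\, d\bm\vartheta_{-a}$ is a constant in $\bm\vartheta_{-a}$.

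On the other side, I would write the joint posterior as $\pi(\vartheta_a, \bm\vartheta_{-a} \mid \mathbf{y}) \propto p(\mathbf{y} \mid \vartheta_a, \bm\vartheta_{-a})\, \pi(\vartheta_a, \bm\vartheta_{-a})$ and condition on $\vartheta_a = 0$. The conditional density is the joint posterior evaluated at $\vartheta_a = 0$, renormalised in $\bm\vartheta_{-a}$. It is therefore proportional to $p(\mathbf{y} \mid 0, \bm\vartheta_{-a})\,\pi(0, \bm\vartheta_{-a})$.

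The two unnormalised densities agree up to a factor constant in $\bm\vartheta_{-a}$. Both sides are probability densities in $\bm\vartheta_{-a}$, so their normalising constants coincide and the identity holds for every $\mathbf{y}$. This needs $\pi_a(0) > 0$ and $\int p(\mathbf{y} \mid 0,\bm\vartheta_{-a})\pi(0,\bm\vartheta_{-a})\,d\bm\vartheta_{-a} \in (0,\infty)$, which I would state as the standing regularity assumptions under which both sides are defined.

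For the general restriction, I would complete the rows of $\mathbf{C}_m$ to an invertible matrix $\mathbf{T} = (\mathbf{C}_m^\top, \mathbf{D}^\top)^\top$ and set $\bm\phi = \mathbf{T}\bm\vartheta = (\bm\phi_a, \bm\phi_{-a})$ with $\bm\phi_a = \mathbf{C}_m\bm\vartheta$. A constant Jacobian $|\det\mathbf{T}|^{-1}$ carries priors and posteriors over, and it cancels under conditioning. The constraint becomes $\bm\phi_a = 0$, and the argument above goes through verbatim with a vector $\bm\phi_a$ in place of the scalar.

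The main obstacle is conceptual rather than computational. Conditioning on the null event $\vartheta_a = 0$ is only meaningful through a choice of version of the conditional density (the Borel--Kolmogorov issue). I would address this by defining both the compatible prior and the conditional posterior as the ratio of the joint density to the marginal density in the \emph{same} coordinates. That shared definition is what makes the two sides match, and it is also why the reparameterisation must be linear with a constant Jacobian rather than arbitrary.
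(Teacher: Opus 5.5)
Your proposal is correct and follows essentially the same route as the paper's proof: Bayes' theorem on both sides with the shared likelihood $p(\mathbf{y}\mid 0,\bm\vartheta_{-a})$, and the slice factorisation $\pi(0,\bm\vartheta_{-a}) = \pi_a(0)\,\pi(\bm\vartheta_{-a}\mid\vartheta_a=0)$ with $\pi_a(0)$ absorbed into the normalising constant. Like the paper, you handle the general constraint by an invertible linear reparameterisation whose constant Jacobian cancels, and your explicit regularity conditions and your remark that both conditionals must be the same joint-to-marginal version make precise what the paper leaves implicit.
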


The identity follows from Bayes' theorem once the joint prior is factored on the constraint slice, the mass at $\vartheta_a = 0$ absorbing into the normalising constant.
Appendix~\ref{app:compat} gives the argument in full.
The posterior equality \eqref{eq:submodel-id} is used implicitly in those Savage--Dickey derivations to compute Bayes factors, whereas here it does something different, licensing the refit-free construction.
Because the \loco{} score of Section~\ref{sec:laplace} is a  functional of the posterior alone, replacing the submodel refit by conditioning the encompassing posterior introduces no prior-level approximation.
Of course, while prior-level compatibility is exact, our operational machinery  applies the constraint $\mathbf{C}_m\bm\vartheta = 0$ directly to the Gaussian Laplace summary $\mathcal{N}(\bm\vartheta^\ast, \bm\Omega)$ of the reference model.  
This corresponds to a second-order Taylor approximation of the true constrained mode and curvature. 
The size of that discrepancy depends on how far the constraint sits from the encompassing centre.
Section~\ref{sec:sim-conditioning} quantifies the dependence and shows the error to be small near the centre, one-sided beyond it, and diagnosable in advance from the Laplace summary itself.

\subsection{Scoring a Restricted Submodel}
\label{sec:restr-laplace}

Under the Laplace approximation $\pi(\bm\vartheta \mid \mathbf{y}) \approx \mathcal{N}(\bm\vartheta^\ast, \bm\Omega)$, conditioning on $\mathbf{C}_m\bm\vartheta = 0$ is Gaussian with moments
\begin{align}
\bm\vartheta^\ast_m &= \bm\vartheta^\ast - \bm\Omega \mathbf{C}_m^\top \bigl(\mathbf{C}_m \bm\Omega \mathbf{C}_m^\top\bigr)^{-1} \mathbf{C}_m\,\bm\vartheta^\ast, \label{eq:restr-mean} \\
\bm\Omega_m &= \bm\Omega - \bm\Omega \mathbf{C}_m^\top \bigl(\mathbf{C}_m \bm\Omega \mathbf{C}_m^\top\bigr)^{-1} \mathbf{C}_m \bm\Omega. \label{eq:restr-cov}
\end{align}
The matrix inverted has dimension equal to the number of constraints, typically far smaller than $d$.
\textcite{penny2013efficient} use the same operation to evaluate Savage--Dickey Bayes factors in nested generalised linear models.

For a single-path restriction $\vartheta_a = 0$, setting $\mathbf{C}_m = \mathbf{e}_a^\top$ in \eqref{eq:restr-mean}--\eqref{eq:restr-cov} gives the rank-one update
\begin{equation}
\bm\vartheta^\ast_m = \bm\vartheta^\ast - \frac{\vartheta^\ast_a}{\Omega_{aa}}\,\bm\Omega_{\cdot a},
\qquad
\bm\Omega_m = \bm\Omega - \frac{1}{\Omega_{aa}}\,\bm\Omega_{\cdot a} (\bm\Omega_{\cdot a})^\top,
\label{eq:restr-rank1}
\end{equation}
where $\bm\Omega_{\cdot a}$ is the $a$th column of $\bm\Omega$ and $\Omega_{aa}$ its $a$th diagonal entry.
The cost is $O(d^2)$ per restriction, with no matrix inversion required once $\bm\Omega$ is in hand.
This is the cheapest case, met whenever two candidate structures differ by a single dropped path.

Scoring a submodel by \eqref{eq:laplace-2nd} requires the cluster-level score and Hessian at the restricted centre $\bm\vartheta^\ast_m$.
Since $\bm\Omega_m$ is singular on the constraint span, all operations are performed in the $d_m = d - \rank(\mathbf{C}_m)$ unconstrained subspace.
Let $\bm\vartheta^\ast_{m,\mathrm{free}}$, $\bm\Omega_{m,\mathrm{free}}$, $\mathbf{s}_{j,m,\mathrm{free}}$, and $\mathbf{H}_{j,m,\mathrm{free}}$ denote the corresponding quantities.
The score and Hessian are analytic in the model-implied covariances and are obtained once per candidate; the explicit trace formulae are collected in Appendix~\ref{app:trace}.
The submodel score is then
\begin{equation}
\log \hat{p}_m(\mathbf{y}_j \mid \mathbf{y}_{-j})
\;\approx\; \ell_{j,m}(\bm\vartheta^\ast_{m,\mathrm{free}})
- \tfrac{1}{2}\,\mathbf{s}_{j,m,\mathrm{free}}^\top \mathbf{A}_{j,m}^{-1} \mathbf{s}_{j,m,\mathrm{free}}
+ \tfrac{1}{2}\,\log\bigl|\,\mathbf{I} + \bm\Omega_{m,\mathrm{free}}\,\mathbf{H}_{j,m,\mathrm{free}}\bigr|,
\label{eq:laplace-2nd-restricted}
\end{equation}
with $\mathbf{A}_{j,m} = \bm\Omega^{-1}_{m,\mathrm{free}} + \mathbf{H}_{j,m,\mathrm{free}}$, and $\elpd_{\loco}(m) = \sum_j \log \hat{p}_m(\mathbf{y}_j \mid \mathbf{y}_{-j})$.

\section{Leave-One-Subject-Out From the Same Model Fit}
\label{sec:loso-main}
Proposition~\ref{prop:harmonic-mean} and Theorem~\ref{prop:laplace-1st} use the held-out unit only through (a) a factorisation of the full-data likelihood into the density to be scored and the density of everything else, and (b) the availability of a per-unit log-likelihood with score and Hessian at the mode.
Neither property is special to clusters.
The deletion index is accordingly a modelling choice, and it is that choice, together with what the held-out density is conditioned on, that fixes the prediction question.
Where the held-out block is independent of the rest given $\bm\vartheta$, as a cluster is by \eqref{eq:cluster-indep} and a subject is under independent sampling, the required factorisation is the marginal one already used.
Where it is not (e.g., subject inside an observed cluster), the chain rule supplies it, $p(\mathbf{y} \mid \bm\vartheta) = p(\mathbf{y}_{sj} \mid \mathbf{y}_{j,-s}, \bm\vartheta)\, p(\mathbf{y}_{-(sj)} \mid \bm\vartheta)$, and the construction applies with this conditional density instead of the marginal one.
Two subject-level scores follow, the conditional target within a cluster and the marginal target for independent data.

\subsection{The Conditional Target: A Subject in an Observed Cluster}
\label{sec:loso-conditional}

The conditional \loso{} target \eqref{eq:loso-target} scores a held-out subject against the clustermates that remain.
Its integrand, $p(\mathbf{y}_{sj} \mid \mathbf{y}_{j,-s}, \bm\vartheta)$, is available in closed form via standard Gaussian conditioning as $\mathbf{y}_{sj} \mid \mathbf{y}_{j,-s}, \bm\vartheta \sim \mathcal{N}_p(\bm\mu_{sj}, \bm\Sigma_{sj})$, with
\begin{equation*}
\begin{aligned}
\bm\mu_{sj}(\bm\vartheta) &= \bm\mu + m_j \bm\Sigma_{\mathrm{b}} (\bm\Sigma_{\mathrm{w}} + m_j \bm\Sigma_{\mathrm{b}})^{-1} (\bar{\mathbf{y}}_{j,-s} - \bm\mu), \\
\bm\Sigma_{sj}(\bm\vartheta) &= \bm\Sigma_{\mathrm{w}} + \bm\Sigma_{\mathrm{b}} - m_j \bm\Sigma_{\mathrm{b}} (\bm\Sigma_{\mathrm{w}} + m_j \bm\Sigma_{\mathrm{b}})^{-1} \bm\Sigma_{\mathrm{b}},
\end{aligned}
\end{equation*}
where $m_j = n_j - 1$ and $\bar{\mathbf{y}}_{j,-s}$ is the leave-one-out cluster mean.
The harmonic-mean identity of Proposition~\ref{prop:harmonic-mean} holds with this conditional density replacing $p(\mathbf{y}_j \mid \bm\vartheta)$, and the Taylor scores in Equations~\eqref{eq:laplace-1st}--\eqref{eq:laplace-2nd} apply verbatim with $\ell_{sj}(\bm\vartheta) := \log p(\mathbf{y}_{sj} \mid \mathbf{y}_{j,-s}, \bm\vartheta)$ and its derivatives replacing $\ell_j$, $\mathbf{s}_j$, and $\mathbf{H}_j$.
The total conditional score $\elpd_{\loso}$ is thus computable from the same encompassing model, at a cost proportional to $N = \sum_j n_j$.
Note that while the gap identity \eqref{eq:peff-gap} holds for any deletion unit, its limit $d$ requires disjoint ones (a requirement met by clusters, but not conditionally scored subjects).

Where the goal is interpolation within an observed cluster, i.e., predicting a further
student in a school already in the sample, the above conditional target is the right
one.
For clustered data that is seldom the goal, however.
\textcite{merkle2019bayesian} make the choice turn on whether a model is meant to predict for the clusters in the data or for new ones, and recommend the marginal, cluster-level criterion whenever it is the latter.
The reason is the one given in Section~\ref{sec:setting}, that the held-out subject's clustermates still identify the cluster effect, so the conditional criterion rewards cluster-specific idiosyncrasy that does not transfer.
Regardless of which target a study needs, \texttt{INLAvaan} produces both from the same fit.
The \texttt{type} argument of \texttt{loo()} selects between them,
\texttt{"loco"} for the cluster-level score and \texttt{"loso"} for the
conditional subject-level one.
On two-level fits the cluster-level target is the default.

\subsection{The Marginal Target: Independent Data}
\label{sec:loso-marginal}

The marginal subject-level target is the appropriate one for independent data, where a held-out subject has no clustermates to borrow from.
For independent (single-level) data the held-out unit is a single subject $s$, the observations are independent given $\bm\vartheta$, and the cluster reduced form \eqref{eq:cluster-reduced} collapses to the single-level Gaussian \eqref{eq:reduced-single}, $\ell_s(\bm\vartheta) = \log \mathcal{N}_p\bigl(\mathbf{y}_s; \bm\mu(\bm\vartheta), \bm\Sigma(\bm\vartheta)\bigr)$, with no compound-symmetric covariance to decompose.
Proposition~\ref{prop:harmonic-mean} reduces to the familiar observation-level conditional predictive ordinate $p(\mathbf{y}_s \mid \mathbf{y}_{-s}) = \bigl(\E_{\bm\vartheta \mid \mathbf{y}}[\,1/p(\mathbf{y}_s \mid \bm\vartheta)\,]\bigr)^{-1}$, and the Taylor scores of Theorem~\ref{prop:laplace-1st} apply verbatim with $\ell_j, \mathbf{s}_j, \mathbf{H}_j$ replaced by their subject-level counterparts $\ell_s$, $\mathbf{s}_s := \nabla \ell_s(\bm\vartheta^\ast)$, and $\mathbf{H}_s := \nabla^2 \ell_s(\bm\vartheta^\ast)$,
\begin{align*}
\text{[first order]} \quad \log \hat{p}^{(1)}(\mathbf{y}_s \mid \mathbf{y}_{-s}) &= \ell_s(\bm\vartheta^\ast) - \tfrac{1}{2}\, \mathbf{s}_s^\top \bm\Omega \mathbf{s}_s \\
\text{[second order]} \quad \log \hat{p}^{(2)}(\mathbf{y}_s \mid \mathbf{y}_{-s}) &= \ell_s(\bm\vartheta^\ast) - \tfrac{1}{2}\, \mathbf{s}_s^\top \mathbf{A}_s^{-1} \mathbf{s}_s + \tfrac{1}{2}\, \log\bigl|\mathbf{I} + \bm\Omega \mathbf{H}_s\bigr|,
\end{align*}
with $\mathbf{A}_s = \bm\Omega^{-1} + \mathbf{H}_s$, the second-order form again requiring $\mathbf{A}_s \succ 0$.
The conditioning of Section~\ref{sec:restrictions} applies unchanged as well, so a single encompassing fit scores every candidate submodel at the subject level.
The same two requirements are met by other designs.
Multigroup data only re-index $\bm\vartheta$, leaving the deletion index and the factorisation over subjects untouched, and jointly modelled exogenous covariates enter the per-unit density as further coordinates (Appendix~\ref{app:covariates}).
In each case it is enough to identify the density of the unit one means to predict; the score and Hessian machinery then applies unchanged.

\section{Simulation Studies}
\label{sec:sim}

We validate the closed-form Taylor elpd scores against brute-force refits to address two questions.
First, does the Taylor score recover the gold-standard predictive density, and where they differ, does the discrepancy originate in the truncation itself or in the posterior approximation on which it is built?
Second, when a submodel is scored by conditioning the encompassing Laplace summary rather than by its own fit, how far may the constraint sit from the summary centre before the conditioned score degrades?

\subsection{Accuracy, Error Decomposition, and Computational Cost}
\label{sec:sim-accuracy}

The study uses $p = 6$ indicators with intercepts $\bm\nu = (0.5, 1.0, 1.5, 0.8, 1.2, 0.6)$, within-level loadings $\bm\lambda^{\mathrm{w}} = (1.0, 0.8, 1.2, 0.9, 1.1, 0.7)$ on a single factor of variance $\psi^{\mathrm{w}} = 1$ with residual variances $\bm\theta^{\mathrm{w}} = (0.5, 0.6, 0.4, 0.7, 0.5, 0.6)$, and a single between-level factor with loadings $\bm\lambda^{\mathrm{b}} = (1.0, 1.1, 0.9, 1.2, 0.8, 1.0)$, variance $\psi^{\mathrm{b}} = 0.5$ and residual variances $\bm\theta^{\mathrm{b}} = (0.20, 0.15, 0.25, 0.10, 0.20, 0.15)$.
Between-level loadings differ from within-level ones, so the generating model is deliberately not invariant across levels.
These values imply standardised within-level loadings of about $0.67$--$0.89$ and indicator intraclass correlations (ICCs) of about $0.23$--$0.38$, a well-identified measurement model with a moderately high ICC.
The error magnitudes reported below should be read against that design, since what a cluster carries about the between-level parameters scales with its ICC.

The theorems of Section~\ref{sec:laplace} are stated for a generic deletion unit, so the marginal and conditional targets differ only in the integrand substituted for $\ell_u$.
We therefore study two designs, one for each target.
Marker-variable scaling is used throughout, and under exact specification, the reported errors purely reflect approximation.
For the marginal target (\loco), we simulate two-level data with distinct within- and between-level one-factor structures, varying the number of clusters $J \in \{30, 100, 200\}$ and cluster sizes $n_j \in \{5, 20\}$, so that $d = 30$. 
Crossing $n_j$ with $J$ aims to isolate the effect of the cluster count from the total sample size.
For the subject-level target (\loso), we draw \emph{unclustered} observations $\mathbf{y}_s \sim \mathcal{N}_p(\bm\nu, \bm\Sigma_{\mathrm{w}})$ from the within-level measurement model and fit the matching single-level confirmatory factor analysis at $N \in \{100, 500, 1000\}$, so that $d = 18$.

Each cell is replicated 100 times.
Within a replicate, a single \texttt{INLAvaan} fit supplies the Taylor scores for every unit. We then execute brute-force \texttt{INLAvaan}-Laplace and \texttt{blavaan}-MCMC refits on $K$ randomly chosen held-out units ($K = 30$ clusters for \loco{}, $K = 100$ subjects for \loso{}, with one MCMC chain of $800$ draws after $400$ burn-in), aligning and pooling the per-unit quantities across replicates. 
The two baselines isolate the two approximations at work.
The short chains are a concession to scale, as the design calls for 48,000 \texttt{blavaan} refits in total.
They still resolve the benchmark finely enough, since each tabulated entry pools $3{,}000$ (\loco) or $10{,}000$ (\loso) of them, and the Gaussian error falls to $0.000$--$0.001$ at the larger designs, which appreciable Monte Carlo noise in the benchmark would have kept above zero.
All \loso{} replicates completed and $96\%$ of \loco{} replicates did, with a further $0.02\%$ of scored units discarded, in both cases predominantly from divergent MCMC refits.
Errors are reported in elpd units.

Figure~\ref{fig-sim-agree} plots the first- and second-order Taylor elpd scores against the MCMC elpd across identical held-out units. 
The second-order estimate tracks the identity line closely across both arms and all sample sizes. As reported in Table~\ref{tbl-sim-decomp}, per-cell discrepancies vanish almost entirely once $J \ge 100$ (\loco) or $N \ge 500$ (\loso). 
The only visible departure occurs at the smallest designs ($J = 30, n_j = 5$ for \loco; $N = 100$ for \loso), and strictly in the first-order series, where the linear term alone fails to capture the curvature of the held-out predictive density. The second-order correction reliably eliminates this bias.

\begin{figure}[htbp]
\centering
\includegraphics[width=\linewidth]{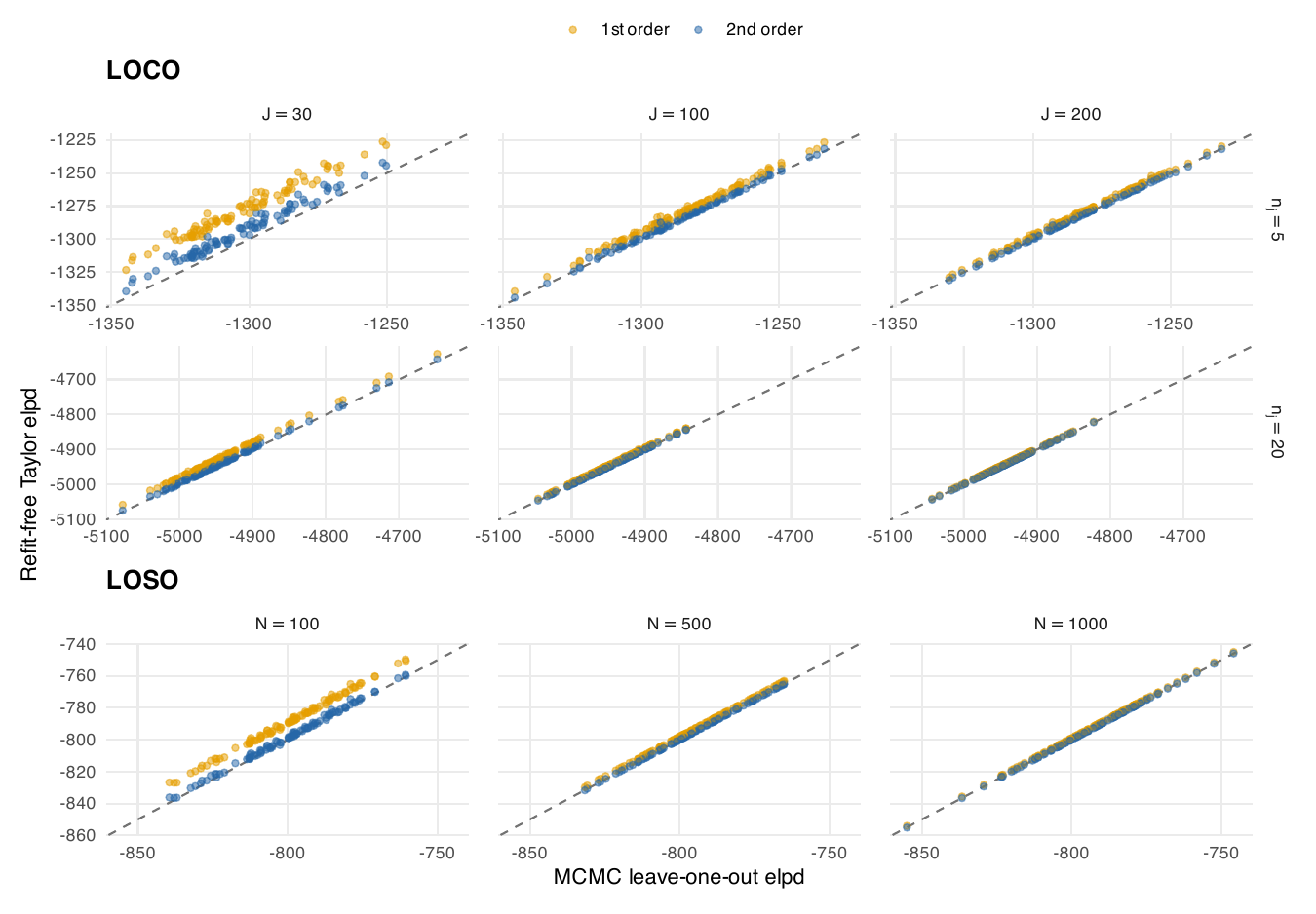}
\caption{First-order (yellow) and second-order (blue) Taylor elpd scores, each from a single fit, against brute-force MCMC-refit elpd over the same held-out units. Each point is one simulated dataset, summing the $K$ refitted units ($K = 30$ clusters, $K = 100$ subjects). The axes are therefore common within each row, since the magnitude of such a sum is set by $n_j$ rather than by $J$. The dashed line indicates exact agreement.}
\label{fig-sim-agree}
\end{figure}

Figure~\ref{fig-sim-agree} also appears to show the first-order bias shrinking as $J$ (\loco{}) or $N$ (\loso{}) grows, but Equation~\eqref{eq:peff-gap} dictates a nonzero limit.
The explanation is that only $K$ units are refitted, and $K$ stays fixed as the design grows.
Each panel therefore shows $K/J$ (or $K/N$) of a total that does not itself shrink, which is why the bias appears to fade.
Figure~\ref{fig-sim-scaling} sums over all units instead, and there the first-order error does not vanish but settles on the limit of \eqref{eq:peff-gap}.
The plotted total error carries the Gaussian error alongside the truncation error, but that error, the prior shrinkage $\tfrac12 \operatorname{tr}(\bm\Omega \mathbf{P})$, and the remainder of \eqref{eq:peff-gap} are all $O(J^{-1})$, leaving $d/2$ as the asymptote: $24.8$ falls to $15.8$ elpd against $d/2 = 15$ for \loco{} and $10.8$ to $9.2$ against $d/2 = 9$ for \loso{}, while the second-order errors appear to diminish.

\begin{figure}[htbp]
\centering
\includegraphics[width=\linewidth]{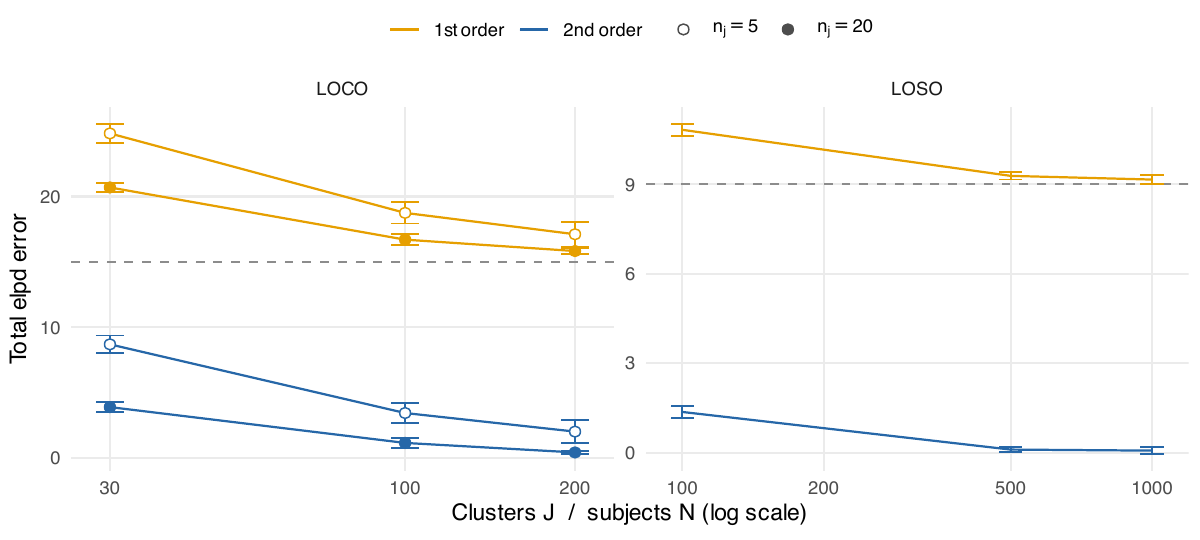}
\caption{Total elpd error over \emph{all} units against the number of clusters $J$ (\loco{}) or subjects $N$ (\loso{}), estimated as $J$ (or $N$) times the mean per-unit error and shown with $\pm 2$ sampling standard errors. The dashed line is the asymptotic $d/2$ limit of \eqref{eq:peff-gap}, which the summed gap approaches to $O(J^{-1})$.}
\label{fig-sim-scaling}
\end{figure}

To isolate the source of any residual discrepancy, we introduce a third quantity: a brute-force \texttt{INLAvaan} refit on the reduced data $\mathbf{y}_{-u}$, with $u=j$ for clusters and $u=s$ for subjects. 
This allows us to partition the total error into two distinct components: 
\begin{equation*}
\underbrace{\widehat{\mathrm{elpd}}_u - \mathrm{elpd}^{\mathrm{MCMC}}_u}_{\text{total error}}
\;=\;
\underbrace{\widehat{\mathrm{elpd}}_u - \mathrm{elpd}^{\mathrm{INLA}}_u}_{\text{truncation error}}
\;+\;
\underbrace{\mathrm{elpd}^{\mathrm{INLA}}_u - \mathrm{elpd}^{\mathrm{MCMC}}_u}_{\text{Gaussian error}}.
\end{equation*}
The truncation error compares the Taylor score against an actual refit under the identical Laplace posterior, isolating the behaviour of the expansion itself. The Gaussian error compares that \texttt{INLAvaan} refit against MCMC, isolating the limits of the Gaussian-posterior approximation, which, by Bernstein--von Mises arguments, should vanish as information accrues. 

Table~\ref{tbl-sim-decomp} details this partition, and two conclusions follow.
First, at second order the Taylor truncation contributes negligibly to the overall discrepancy. 
Its second-order error is small, largest at $0.061$ elpd ($J = 30, n_j = 5$), falling to $0.002$ or below for \loco~at $J \ge 200$ and to nearly zero for \loso~at $N \ge 500$, and remains well below the first-order error (which reaches $0.60$) throughout, confirming that the curvature correction sharpens the estimate without over-correcting. 
Because this component compares the Taylor score against a true refit, the residual error against MCMC is not attributable to our expansion machinery. 
Summed over units, the first-order truncation error runs from $18.0$ down to $15.6$ elpd across the six \loco{} cells and from $9.9$ to $9.2$ across the three \loso{} cells, against second-order sums that stay below $1.9$ throughout.
Measured directly using $\tfrac12 \sum_j \operatorname{tr}(-\bm\Omega \mathbf{H}_j)$, $\tfrac12 p_{\mathrm{eff}}$ is $8.6$--$9.0$ against $d/2 = 9$ for \loso{} and $13.4$--$14.9$ against $d/2 = 15$ for \loco{}, the shortfall in each case being the prior shrinkage $\tfrac12 \operatorname{tr}(\bm\Omega \mathbf{P})$ of \eqref{eq:peff-gap}. 
The gap itself behaves as \eqref{eq:peff-gap} dictates: it exceeds $\tfrac12 p_{\mathrm{eff}}$ in every cell by the $O(J^{-1})$ remainder, and this remainder falls from $2.7$ elpd at $J = 30$ to $0.5$ at $J = 200$ for \loco{}, and from $0.9$ to $0.1$ for \loso{}.

Second, the inherited Gaussian posterior approximation error vanishes predictably.
For \loco{} the decay is governed by the number of clusters $J$ rather than the total sample size: the error runs $0.09$--$0.23$ elpd per cluster at $J = 30$ and falls to $0.025$ or below once $J \ge 100$.
This aligns with the large-sample behaviour expected of \loco{}, where each cluster contributes a single between-level observation.
No fit-level quantity we examined predicted the error better than the design itself.
The magnitudes are specific to this design, but the pattern is not: the Gaussian error is largest where $J$ is smallest, which is also where a deleted refit is cheapest to run as a check.

\begin{table}[tbp]

\caption{\label{tbl-sim-decomp}Per-unit mean error decomposition by design cell (elpd units; 100 replicates per cell). \emph{1st \& 2nd order} is the Taylor score against an \texttt{INLAvaan} refit; \emph{Gaussian} is that refit against a full MCMC refit; \emph{Total} is their sum. \emph{LOO time} is the wall-clock cost of the Taylor pass over all units. Monte Carlo standard errors do not exceed $0.002$ in the first- and second-order columns and $0.012$ in the Gaussian and Total columns.}

\centering{

\fontsize{12.0pt}{14.0pt}\selectfont
\begin{tabular*}{\linewidth}{@{\extracolsep{\fill}}rrrrrrr}
\toprule
 &  & \multicolumn{4}{c}{{Mean per-unit error (elpd)}} &  \\ 
\cmidrule(lr){3-6}
$N/J$ & $n_j$ & 1st order & 2nd order & Gaussian & Total & LOO time (s) \\ 
\midrule\addlinespace[2.5pt]
\multicolumn{7}{c}{Leave-one-subject-out (LOSO)} \\[2.5pt] 
\midrule\addlinespace[2.5pt]
100 & -- & 0.099 & 0.005 & 0.009 & 0.014 & 0.140 \\ 
500 & -- & 0.019 & 0.000 & 0.000 & 0.000 & 0.185 \\ 
1000 & -- & 0.009 & 0.000 & 0.000 & 0.000 & 0.249 \\ 
\midrule\addlinespace[2.5pt]
\multicolumn{7}{c}{Leave-one-cluster-out (LOCO)} \\[2.5pt] 
\midrule\addlinespace[2.5pt]
30 & 5 & 0.598 & 0.061 & 0.229 & 0.290 & 0.325 \\ 
30 & 20 & 0.596 & 0.037 & 0.093 & 0.130 & 0.336 \\ 
100 & 5 & 0.162 & 0.009 & 0.025 & 0.034 & 0.474 \\ 
100 & 20 & 0.160 & 0.004 & 0.007 & 0.012 & 0.495 \\ 
200 & 5 & 0.078 & 0.002 & 0.008 & 0.010 & 0.743 \\ 
200 & 20 & 0.078 & 0.001 & 0.001 & 0.002 & 0.659 \\ 
\bottomrule
\end{tabular*}

}

\end{table}

The closed-form procedure is orders of magnitude faster than brute-force cross-validation.
At the largest design point ($J = 200$, $n_j = 20$), one \texttt{INLAvaan} fit plus the closed-form pass over all $200$ clusters takes $2.1$ seconds, against $2.0$ seconds per deleted-cluster \texttt{INLAvaan} refit ($406$ seconds in total, $193\times$) and $24.1$ seconds per \texttt{blavaan}-MCMC refit ($4{,}814$ seconds, $2{,}286\times$).
That ratio widens with $J$, and the reduction in cost entails no loss of accuracy, since the second-order Taylor score agrees with the MCMC benchmark to within $0.002$ elpd per cluster at this design point.

\subsection{Accuracy of Conditioned Submodel Scores}
\label{sec:sim-conditioning}

The compatible-prior identity of Section~\ref{sec:restrictions} is exact, so scoring a submodel by conditioning carries only one further approximation.
The constraint is applied to the Gaussian Laplace summary rather than to the posterior itself, and the quality of that local step must depend on how far the constraint sits from the centre.
To measure that dependence directly, consider that for a restriction $\mathbf{C}_m\bm\vartheta = 0$ of rank $r$, the natural measure of that distance is the Mahalanobis distance from the centre to the constraint surface,
\begin{equation}
D_m^2 \;:=\; (\mathbf{C}_m\bm\vartheta^\ast)^\top \bigl(\mathbf{C}_m \bm\Omega \mathbf{C}_m^\top\bigr)^{-1} (\mathbf{C}_m\bm\vartheta^\ast),
\label{eq:cond-distance}
\end{equation}
read from the Laplace summary $\mathcal{N}(\bm\vartheta^\ast, \bm\Omega)$ that \eqref{eq:restr-mean}--\eqref{eq:restr-cov} condition.
Two readings make $D_m$ interpretable.
It is the size of the mean shift that conditioning induces, since \eqref{eq:restr-mean} gives $(\bm\vartheta^\ast_m - \bm\vartheta^\ast)^\top \bm\Omega^{-1} (\bm\vartheta^\ast_m - \bm\vartheta^\ast) = D_m^2$.
Under the encompassing Gaussian $D_m^2$ follows a $\chi^2_r$ law, so $\Pr(\chi^2_r > D_m^2)$ is the posterior mass lying further from the centre than the constraint itself.
For a single dropped path, $D_m$ reduces to $|\vartheta^\ast_a| / \sqrt{\Omega_{aa}}$, the number of posterior standard deviations between the fitted coefficient and the zero it is constrained to.

For this study, two-level data are generated with $p = 12$ indicators loading on four three-indicator factors at each level: three latent predictors $x_1, x_2, x_3$ and an outcome $y$.
Standardised within-level loadings cycle through $(0.80, 0.70, 0.90)$ and between-level loadings through $(0.75, 0.85, 0.70)$, indicator intercepts through $(0.5, 1.0, 1.5)$, and every indicator has an ICC of $0.20$.
At the between level the outcome is regressed on all three predictors, $y_{\mathrm{b}} = \beta_1 x_{1b} + \beta_2 x_{2b} + \beta_3 x_{3b} + \zeta_{\mathrm{b}}$, with $(\beta_1, \beta_2, \beta_3) = (0.45, 0.12, 0)$, i.e., one strong path, one weak, and one exactly null.
The three latent predictors are mutually correlated at $\rho$, and we take $\rho \in \{0.35, 0.80\}$ to vary how strongly their coefficient estimates are entangled.
The candidate set is all $2^3 = 8$ subsets of these $\beta$-paths, so restrictions of every rank $r \in \{1, 2, 3\}$ occur.
We use $J \in \{30, 100\}$ clusters of size $n_j = 20$, with $150$ replicates per cell.

For each dataset the encompassing model is fitted once and every candidate is scored twice, first by conditioning the encompassing Laplace summary on $\mathbf{C}_m\bm\vartheta = 0$, and then by a separate \texttt{INLAvaan} fit of the candidate itself.
The two routes share the posterior machinery, so their difference isolates the conditioning step alone.
Not every replicate yielded a complete set of candidate scores, and the retained counts are given in Table~\ref{tbl-sim-cond} and the attrition described alongside it.
Figure~\ref{fig-sim-cond} plots the conditioned minus separately fitted \loco{} elpd against $D_m$, over the $3{,}661$ restricted candidate scorings.
Sampling variability spreads $D_m$ from zero to $6.7$ posterior standard deviations.
The pattern is the same in all four design cells.
Near the centre the error is small and symmetric about zero, and beyond two standard deviations it turns systematically negative and grows.

\begin{figure}[htbp]
\centering
\includegraphics[width=0.95\linewidth]{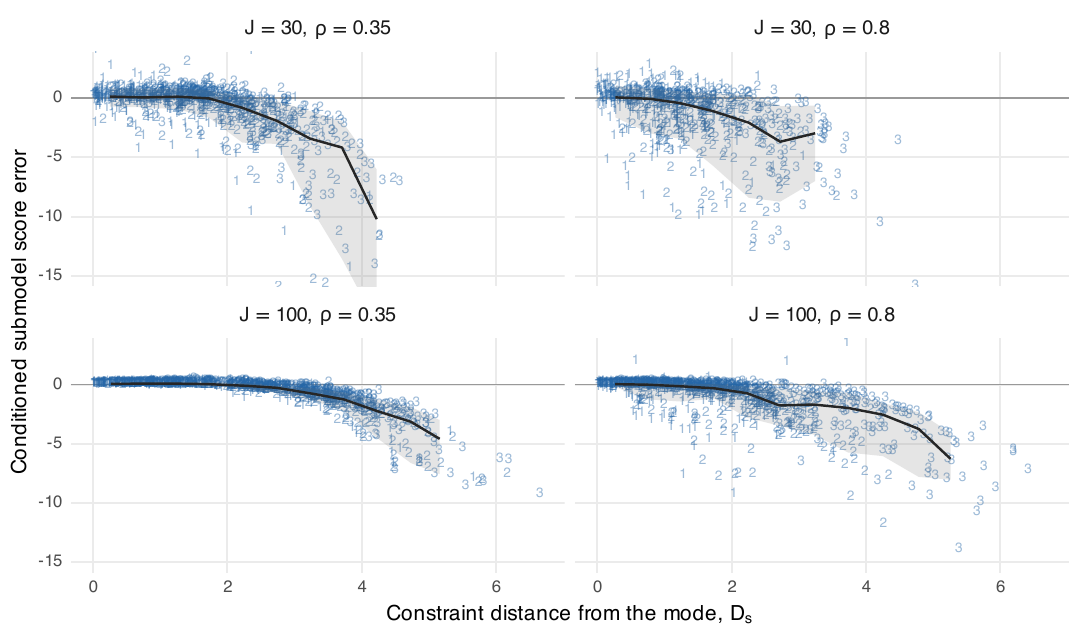}
\caption{Error of the conditioned submodel score (conditioned minus separately fitted elpd) against the constraint's Mahalanobis distance $D_m$ from the encompassing summary centre, by design cell. Each point is plotted as the number of paths that candidate restricts. The solid line is the median error in half-standard-deviation bins and the band spans the 10th--90th percentile.}
\label{fig-sim-cond}
\end{figure}

Three features stand out.
First, the one-sidedness sharpens with distance.
Pooling the four design cells, the error is negative in $38\%$ of scorings within one standard deviation of the centre, in $59\%$ between one and two, $90\%$ between two and three, $99\%$ between three and four, and in all $225$ scorings beyond four.
The median error over the same bands runs $+0.04$, $-0.07$, $-0.66$, $-1.26$ and $-3.58$ elpd.
A submodel conditioned far from the centre is therefore \emph{under-scored} rather than over-scored, which rules out the spurious promotion of a heavily restricted candidate, though not its false exclusion.
Second, the error is governed by $D_m$ and not by the rank of the restriction.
The median absolute error in the $D_m \le 1$ band is $0.13$, $0.15$ and $0.33$ elpd for $r = 1, 2, 3$, and in the $D_m > 4$ band $4.73$, $3.35$ and $3.57$, so the distance matters far more than the number of paths being constrained.
Third, $D_m$ is read from the encompassing Laplace summary alone, so the candidates at risk are identifiable before any scoring is done.

$D_m$ is measured in units of $\bm\Omega$, so it places the candidates of a single encompassing fit on a common scale.
That scale does not carry across fits, however.
Correlation among the latent predictors inflates the posterior standard deviations of the individual coefficients, so at $\rho = 0.80$ the constraints sit \emph{closer} to the centre---a median $D_m$ of $1.55$ against $2.37$ at $\rho = 0.35$ when $J = 100$---and yet the error is larger in every distance band, reaching a median absolute $3.05$ elpd against $1.64$.
Conditioning displaces the summary along the flat ridge that collinearity creates, and a quadratic (Laplace) approximation is least accurate along exactly such directions.

These errors matter only insofar as they change what a comparison concludes, so the relevant question is whether the conditioned scores lead to the same decision as refitting.
At $J = 100$ the conditioned scores reproduce the ranking that separate fits produce almost exactly (Table~\ref{tbl-sim-cond}), with a Kendall correlation of $0.95$ at $\rho = 0.35$ and $0.85$ at $\rho = 0.80$, and they select the same leading candidate in $95\%$ and $87\%$ of replicates.
Where the leaders differ, the disagreement is between near-ties. 
The \emph{predictive regret} from selecting the highest-scoring candidate, the elpd forfeited relative to the best under separate fits, is $0.004$ and $0.041$.
A $\Delta\mathrm{elpd}$ between two candidates carries a paired standard error, computed from the per-cluster differences in log-CPO contributions in the usual normal approximation \citep{vehtari2017practical}.
The conditioning error amounts to $0.27$ and $0.52$ of that standard error at $J = 100$, so it is well inside the noise the comparison already carries.
At $J = 30$ every one of these degrades, and the conditioning error becomes comparable to the paired standard error itself.
The real application of Section~\ref{sec:midus} shows the same behaviour, its near-centre candidates being the least under-scored of the 16.

\begin{table}[tbp]
\caption{\label{tbl-sim-cond}Concordance between conditioned submodel scores and independent \texttt{INLAvaan} fits.
Designs are ordered from hardest to easiest (Monte Carlo SEs in parentheses). \emph{Reps} is the number of replicates, of $150$, with a complete set of candidate scores; the attrition is described in the text. \emph{Whole ranking} is Kendall's $\tau$ between candidate orderings; \emph{best model} is the proportion of top-rank matches. \emph{Elpd lost} is the predictive regret from conditioned selection; \emph{in SE units} expresses it relative to the paired standard error, so a value below $1$ lies within estimation noise.}
\centering
\begin{tabular*}{\linewidth}{@{\extracolsep{\fill}}rrrcccc}
\toprule
 & & & \multicolumn{2}{c}{Agrees with refitting} & \multicolumn{2}{c}{Size of the disagreement} \\
\cmidrule(lr){4-5}\cmidrule(lr){6-7}
$J$ & $\rho$ & reps & whole ranking & best model & elpd lost & in SE units \\
\midrule
$30$ & $0.80$ & $98$ & $0.57\ (0.03)$ & $0.53\ (0.05)$ & $0.379\ (0.069)$ & $1.17\ (0.10)$ \\
$30$ & $0.35$ & $125$ & $0.72\ (0.02)$ & $0.71\ (0.04)$ & $0.139\ (0.029)$ & $0.89\ (0.14)$ \\
$100$ & $0.80$ & $150$ & $0.85\ (0.01)$ & $0.87\ (0.03)$ & $0.041\ (0.019)$ & $0.52\ (0.03)$ \\
$100$ & $0.35$ & $150$ & $0.95\ (0.01)$ & $0.95\ (0.02)$ & $0.004\ (0.002)$ & $0.27\ (0.02)$ \\
\bottomrule
\end{tabular*}
\end{table}

Two limitations should be read alongside these figures.
At $J = 30$ a complete set of candidate scores was obtained in $125$ of $150$ replicates at $\rho = 0.35$ and $98$ of $150$ at $\rho = 0.80$.
The remainder lost at least one cluster to a non-positive-definite $\mathbf{A}_j$, and did so in the separately fitted references about as often as in the conditioned scores.
The retained replicates are thus the better-behaved ones, and the $J = 30$ summaries should be read as mildly optimistic.
The reference itself is also least reliable there: in five of the $J = 30$ candidate cells a separate fit failed to converge, affecting the conditioned and separately fitted routes identically.

\section{Cluster-Level Model Choice in PISA School-Safety Data}\label{sec:pisa}

Consider the PISA 2022 student questionnaire \autocite{oecd2024pisa}, restricted to Saudi Arabia, with $N = 6{,}125$ students in $J = 186$ schools (sizes $10$--$42$, median $36$).
Two latent constructs describe the school-safety climate, each measured on \emph{both} levels: (a) \textbf{perceived safety} (four items---feeling safe to and from school, in classrooms, and elsewhere) and (b) \textbf{bullying victimisation} (the six-item PISA bullying scale).
An exploratory fit with the level-2 loadings freely estimated terminated on the boundary of the parameter space, with a near-singular between-school covariance matrix.
The ten indicators carry ICCs of only $0.005$--$0.038$, so there is enough between-school information to support common loadings but not a freely loaded level-2 measurement model.
We therefore constrain the loadings of each construct equal across the two levels.
Two student-level covariates, a student's sense of belonging and socio-economic status (SES), are modelled jointly with the responses, so each splits into a within- and a between-school part and freely covaries with the factors.
What we set out to learn is whether their \emph{between-school} part predicts that school's safety climate.

We answer this by comparing three nested models that share the same variables and within-school structure and differ \emph{only} in which between-school paths onto perceived safety are freed.
The baseline $\mathcal{M}_1$ is a pure random intercept, leaving school safety orthogonal to the predictors.
$\mathcal{M}_2$ frees the school-level belonging-climate path, regressing perceived safety on school-mean belonging, and $\mathcal{M}_3$ adds a contextual SES path on top.
Each consecutive \loco~difference then measures how much a school's belonging climate, or its socio-economic composition, sharpens prediction of safety in a school never seen during fitting.
Models are compared by the difference of their summed \loco~scores, $\Delta\mathrm{elpd}$, together with the paired standard error of Section~\ref{sec:sim-conditioning}, here computed from the per-school differences in log-CPO contributions.

The Taylor elpd scores in Table~\ref{tbl-pisa} come from three single fits, each taking $43$ seconds inclusive of the \loco~pass over all $186$ schools.
The criterion keeps the belonging path, as freeing the latent regression of belonging to perceived safety ($\mathcal{M}_2$) gives the highest leave-one-school-out elpd, $17.4$ above the random-intercept baseline.
A school where students feel they belong is, predictably, a school where students \emph{feel safe}, and that link improves out-of-cluster prediction.
Adding a contextual SES path on top ($\mathcal{M}_3$) does not help, and the elpd \emph{falls} by $0.7$ well inside its paired standard error, so the two models are not separated on this criterion.
School-mean SES carries little school-level signal here (its between-school correlation with safety is only $\approx 0.14$), so there is little for the marginal criterion to reward.

\begin{table}[tbp]

\caption{\label{tbl-pisa}\loco~model comparison on the PISA 2022 school-safety data, Saudi Arabia ($J = 186$ schools, $N = 6{,}125$ students), by second-order Taylor elpd scores and by MCMC$+$PSIS. Higher elpd means better prediction of a held-out school. $\Delta\elpd$ is the difference from the random-intercept baseline $\mathcal{M}_1$, with paired cluster-level standard error from the per-school differences. The best value in each column is set in bold.}

\centering{

\setlength{\tabcolsep}{5pt}
\begin{tabular*}{\linewidth}{@{\extracolsep{\fill}}llrrrrr}
\toprule
& School-level & & \multicolumn{2}{c}{$\elpd_{\loco}$} & \multicolumn{2}{c}{$\Delta\elpd$ (SE)} \\
\cmidrule(lr){4-5}\cmidrule(lr){6-7}
& predictors of safety & $d$ & Taylor & MCMC$+$PSIS & Taylor & MCMC$+$PSIS \\
\midrule\addlinespace[2.5pt]
$\mathcal{M}_1$ & none & 58 & $-61214.6$ & $-61217.8$ & $-$ & $-$ \\
$\mathcal{M}_2$ & $+$ belonging & 59 & $\bm{-61197.2}$ & $-61201.0$ & $\bm{17.4}$ (6.2) & $16.8$ (6.8) \\
$\mathcal{M}_3$ & $+$ belonging, SES & 60 & $-61197.9$ & $\bm{-61198.0}$ & $16.7$ (6.2) & $\bm{19.7}$ (6.6) \\
\bottomrule
\end{tabular*}

}

\end{table}%

These \loco~scores can be checked against the route an applied user would otherwise take.
For a two-level model \texttt{blavaan}'s Stan target returns a log-likelihood with one entry per cluster, the level-2 effects already integrated out \autocite{merkle2021efficient}, so Pareto-smoothed importance sampling applied to that matrix estimates the same $\elpd_{\loco}$ from a single posterior sample.
The cross-level equality constraint is not expressible through \texttt{blavaan}'s modelling interface, so we impose it in the Stan program it generates, aliasing the free level-2 loading vector to its level-1 counterpart. 
Every other prior is left at defaults identical to \texttt{INLAvaan}'s, so that what is compared is two approximations to one estimand rather than two models.
Fitting all three models with four chains of 2,000 warm-up and 2,000 sampling iterations, each completing roughly under ten minutes, gives the \emph{MCMC+PSIS} columns of Table~\ref{tbl-pisa}.
Every sampler converged cleanly, no divergent transitions were reported, and all $\widehat{R} \le 1.003$.
The two routes agree to within $3.8$ elpd throughout, and the paired standard errors are comparable, so the Taylor scores are validated against the established route.

The two routes do part company on the SES path, where the Taylor score puts $\mathcal{M}_2$ ahead of $\mathcal{M}_3$ by $0.7$ elpd and the MCMC column leans the other way by $3.0$, and neither margin clears twice its paired SE.
In this comparison, however, MCMC$+$PSIS proves inadequate, as deleting an entire school strains the importance weights regardless of how well the sampler has converged.
Figure~\ref{fig-pisa-mcmc} illustrates this instability.
The two methods agree wherever the Pareto index $\hat k_j$---the tail diagnostic attached to each smoothed weight \autocite{vehtari2024pareto}---is small. 
Conversely, the largest discrepancies occur among the 7 out of 558 school--model pairs that PSIS flags as unreliable, with the two most extreme cases appearing in the $\mathcal{M}_2$ score where $\hat k_j>1$.
The \texttt{loo\_compare()} function \autocite{sivula2025uncertainty} from the R \texttt{loo} package warns as much, and flags the difference as too small for its standard error to be trusted.
By contrast, the validity check \eqref{eq:existence-2nd} for the Taylor scores passed (the largest eigenvalue reaches 0.13, against a validity boundary of 1), handing us a clear basis for decision-making.

\begin{figure}[htbp]

\centering{

\includegraphics[width=0.95\linewidth,height=\textheight,keepaspectratio]{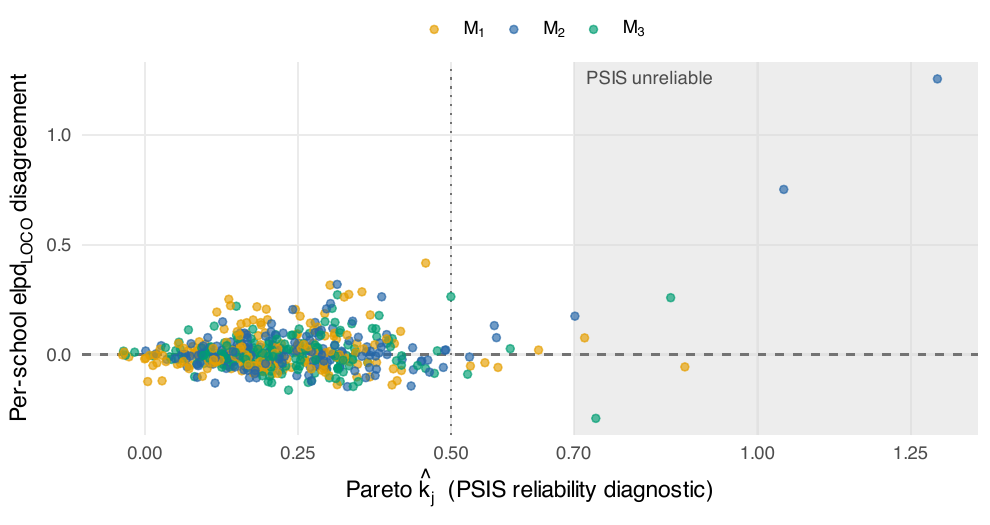}

}

\caption{\label{fig-pisa-mcmc}Per-school disagreement between the Taylor $\elpd_\loco$ and its MCMC$+$PSIS counterpart, against the PSIS reliability diagnostic $\hat k_j$, for all three models. Below $\hat k_j = 0.5$ the discrepancy is centred on zero. The seven pairs in the shaded region beyond $\hat k_j = 0.7$, where the importance-sampling estimate is not to be trusted, include the two largest discrepancies, both in the $\mathcal{M}_2$ score at $\hat k_j > 1$.}

\end{figure}%

\section{Predictive Structure Selection in the MIDUS Sibling Sample}
\label{sec:midus}

We use the sibling sample of the MIDUS study \autocite{radler2014midlife} to ask whether the long-run prospective associations of Extraversion (E) and Openness (O) with psychological well-being (PWB) are best represented as personality preceding PWB, PWB preceding personality, reciprocal prediction, or temporal stability alone.
The restriction to E and O is substantive rather than data-driven, as the two traits form the Plasticity metatrait in higher-order models of personality \autocite{digman1997higherorder,deyoung2015cybernetic}. 
Furthermore, an earlier analysis of the same three MIDUS waves found that E and O, but not the other Big Five traits, predicted subsequent PWB \autocite{joshanloo2023reciprocal}.
The waves were collected in 1995--1996, 2004--2006, and 2013--2014, so the two intervals are of approximately equal length.
For respondent $s$ in sibling-design family $j$ and wave $t$, let $E_{sjt}$ and $O_{sjt}$ denote the released Midlife Development Inventory \autocite[MIDI;][]{lachman1997midlife} scale summaries and let $\mathbf{r}_{sjt}=(r_{sj1t},\ldots,r_{sj6t})^\top$ contain the six Ryff domain scores: autonomy, environmental mastery, personal growth, positive relations, purpose in life, and self-acceptance \autocite{ryff1995structure}.
Stacked over the three waves, the jointly modelled response vector is
\begin{equation*}
\mathbf{y}_{sj}
=
\bigl(E_{sj1},O_{sj1},\mathbf{r}_{sj1}^{\top},
      E_{sj2},O_{sj2},\mathbf{r}_{sj2}^{\top},
      E_{sj3},O_{sj3},\mathbf{r}_{sj3}^{\top}\bigr)^{\top}
\in\mathbb R^{24},
\end{equation*}
and each score is placed on a common Wave-1 reference scale, using the same centring and scaling constants at all three waves.
Complete observations give $N=761$ respondents in $J=429$ families of size one to seven, of which 204 are singletons after longitudinal filtering.
In the notation of Section~\ref{sec:setting}, then, $p=24$ and $n_j\in\{1,\ldots,7\}$.

Every coordinate is decomposed into a family and a respondent component in the manner of Section~\ref{sec:setting}, with $d=1,\ldots,6$ indexing the Ryff domains:
\begin{equation*}
E_{sjt}=\nu_{Et}+E_{jt}^{\mathrm{b}}+E_{sjt}^{\mathrm{w}},
\qquad
O_{sjt}=\nu_{Ot}+O_{jt}^{\mathrm{b}}+O_{sjt}^{\mathrm{w}},
\qquad
r_{sjdt}=\nu_{dt}+r_{jdt}^{\mathrm{b}}+r_{sjdt}^{\mathrm{w}},
\end{equation*}
the intercepts free at every item and wave.
Within families, E and O enter as observed summaries with no measurement model of their own, while the six domain components share a single PWB factor $\eta_{sjt}^{\mathrm{w}}$ at each wave, $r_{sjdt}^{\mathrm{w}}=\lambda_d^{\mathrm{w}}\eta_{sjt}^{\mathrm{w}}+\epsilon_{sjdt}^{\mathrm{w}}$.
Self-acceptance is the marker variable, and the remaining loadings are equal over waves \autocite{joshanloo2023reciprocal}, which places $\eta^{\mathrm{w}}_{sjt}$ on a common scale and makes its autoregression interpretable.
Residual variances are free and their covariances are zero, within and across waves.

Collecting the within-family state as $\mathbf{h}_{sjt}^{\mathrm{w}}=(E_{sjt}^{\mathrm{w}},O_{sjt}^{\mathrm{w}},\eta_{sjt}^{\mathrm{w}})^\top$, and writing subscripts outcome-first so that, e.g., $\beta_{\eta E}$ denotes $E_{sjt}^{\mathrm{w}}\to\eta_{sj,t+1}^{\mathrm{w}}$, the transition model is
\begin{equation*}
\mathbf{h}_{sj,t+1}^{\mathrm{w}}
=\bm\Gamma^{\mathrm{w}}(\bm\delta)\mathbf{h}_{sjt}^{\mathrm{w}}+\bm\zeta_{sj,t+1}^{\mathrm{w}},
\qquad t=1,2,
\qquad
\bm\Gamma^{\mathrm{w}}(\bm\delta)=
\begin{bmatrix}
a_E & 0 & \delta_{E\eta}\beta_{E\eta}\\
0 & a_O & \delta_{O\eta}\beta_{O\eta}\\
\delta_{\eta E}\beta_{\eta E}
  & \delta_{\eta O}\beta_{\eta O} & a_\eta
\end{bmatrix},
\end{equation*}
with $\mathbf{h}_{sj1}^{\mathrm{w}}\sim\mathcal N_3(\mathbf{0},\bm\Psi_1^{\mathrm{w}})$ and $\bm\zeta_{sjt}^{\mathrm{w}}\sim\mathcal N_3(\mathbf{0},\bm\Psi_t^{\mathrm{w}})$ for $t=2,3$, the three unrestricted diagonal blocks of the within-level disturbance covariance $\bm\Psi^{\mathrm{w}}$ of Section~\ref{sec:setting}.
The diagonal stability paths are always included, while the four personality--PWB paths are selected, switched on and off by the indicators $\bm\delta$.
Each coefficient is shared across the two intervals, which the near-equal wave spacing makes reasonable, so one $\delta$ controls two displayed arrows.
E--O cross-lags are fixed to zero, their contemporaneous association remaining free at every wave through $\bm\Psi_t^{\mathrm{w}}$, so that the candidates differ along one substantive axis only.

Between families, each of the eight constructs $v\in\{E,O,r_1,\ldots,r_6\}$ has one stable component carrying unit loadings on its three wave scores, $v_{jt}^{\mathrm{b}}=f_{jv}^{\mathrm{b}}+e_{jvt}^{\mathrm{b}}$, with $\mathbf{f}_j^{\mathrm{b}}=(f_{jE}^{\mathrm{b}},\ldots,f_{jr_6}^{\mathrm{b}})^\top\sim\mathcal N_8(\mathbf{0},\bm\Psi^{\mathrm{b}})$ unrestricted and free wave-specific residual variances $\bm\Theta^{\mathrm{b}}$.
This block captures stable family covariance without assigning temporal direction.
It carries no between-family PWB factor and no level-2 paths, which keeps the entire structural question at the respondent-within-family level.

Order the four path indicators as $\bm\delta=(\delta_{\eta E},\delta_{E\eta},\delta_{\eta O},\delta_{O\eta})^\top \in\{0,1\}^4$.
For either personality trait $T\in\{E,O\}$, the pair $(\delta_{\eta T},\delta_{T\eta})$ distinguishes stability alone $(0,0)$, trait-leading $(1,0)$, PWB-leading $(0,1)$, and reciprocal prediction $(1,1)$.
The Cartesian product for E and O gives $2^4=16$ candidates, so the eight optional arrows in Figure~\ref{fig:midus-overview} represent just four decisions, not eight.
The encompassing model has $\bm\delta=(1,1,1,1)^\top$, and every other structure is the linear restriction $\mathbf{C}_{\bm\delta}\bm\vartheta=0$ of Section~\ref{sec:restrictions} that sets the omitted $\beta$ coordinates to zero.
Everything else---autoregressions, covariances, measurement parameters, intercepts, and the family block---remains free and updates when the restriction is imposed.
Since \texttt{INLAvaan} places independent priors on the parameters by default, the compatible-prior convention \eqref{eq:compat} holds for every $\mathbf{C}_{\bm\delta}$.

\begin{figure}[t]
\centering
\begin{tikzpicture}[
  scale=1.1,
  >=Stealth,
  trait/.style={draw,rectangle,minimum size=0.80cm,inner sep=1pt,
                fill=white,font=\small},
  pwb/.style={draw,circle,minimum size=0.78cm,
              fill=white,align=center,font=\small},
  fixed/.style={->,very thick,black},
  forward/.style={->,very thick,dashed,blue!70!black},
  reverse/.style={->,very thick,dash dot,orange!85!black,
                  preaction={draw=white,line width=3.4pt}},
  wave/.style={font=\small\bfseries},
  edgelab/.style={font=\scriptsize,fill=white,inner sep=0.7pt}
]

\foreach \w/\x in {1/0,2/5.4,3/10.8}{
  \node[wave] at (\x,4.55) {Wave \w};
  \node[trait] (E\w) at (\x,3.75) {$E_{\w}^{\mathrm{w}}$};
  \node[pwb]   (P\w) at (\x,2.15) {$\eta_{\w}^{\mathrm{w}}$};
  \node[trait] (O\w) at (\x,0.55) {$O_{\w}^{\mathrm{w}}$};
}

\draw[fixed] (E1) -- node[edgelab,above] {$a_E$} (E2);
\draw[fixed] (E2) -- node[edgelab,above] {$a_E$} (E3);
\draw[fixed] (P1) -- node[edgelab,above] {$a_\eta$} (P2);
\draw[fixed] (P2) -- node[edgelab,above] {$a_\eta$} (P3);
\draw[fixed] (O1) -- node[edgelab,below] {$a_O$} (O2);
\draw[fixed] (O2) -- node[edgelab,below] {$a_O$} (O3);

\draw[forward] (E1.south east) to[out=-8,in=168]
  node[pos=0.28,edgelab,above,sloped,text=blue!70!black]
  {$\beta_{\eta E}$} (P2.north west);
\draw[forward] (E2.south east) to[out=-8,in=168]
  node[pos=0.28,edgelab,above,sloped,text=blue!70!black]
  {$\beta_{\eta E}$} (P3.north west);
\draw[reverse] (P1.north east) to[out=12,in=192]
  node[pos=0.28,edgelab,below,sloped,text=orange!85!black]
  {$\beta_{E\eta}$} (E2.south west);
\draw[reverse] (P2.north east) to[out=12,in=192]
  node[pos=0.28,edgelab,below,sloped,text=orange!85!black]
  {$\beta_{E\eta}$} (E3.south west);

\draw[forward] (O1.north east) to[out=8,in=192]
  node[pos=0.28,edgelab,below,sloped,text=blue!70!black]
  {$\beta_{\eta O}$} (P2.south west);
\draw[forward] (O2.north east) to[out=8,in=192]
  node[pos=0.28,edgelab,below,sloped,text=blue!70!black]
  {$\beta_{\eta O}$} (P3.south west);
\draw[reverse] (P1.south east) to[out=-12,in=168]
  node[pos=0.28,edgelab,above,sloped,text=orange!85!black]
  {$\beta_{O\eta}$} (O2.north west);
\draw[reverse] (P2.south east) to[out=-12,in=168]
  node[pos=0.28,edgelab,above,sloped,text=orange!85!black]
  {$\beta_{O\eta}$} (O3.north west);

\path ([yshift=-4mm]current bounding box.south);

\end{tikzpicture}%
\caption{Compact overview of the encompassing respondent-within-family model. Black stability paths are retained throughout; the coloured $\beta$ paths are selectively set to zero in the submodels. Each coefficient is shared across the two intervals; measurement indicators for $\eta$ are omitted, and the within-wave disturbance covariances among $E$, $\eta$, and $O$ are freely estimated in every candidate but not drawn.}
\label{fig:midus-overview}
\end{figure}

Each family is one cluster, so the cluster reduced form \eqref{eq:cluster-reduced} applies with $p=24$, and we rank the 16 restrictions by their Taylor $\elpd_{\loco}$ scores.
Whole-family deletion asks which structure best predicts a family unseen during fitting. 
The conditional target \eqref{eq:loso-target} is in any case unavailable here, since 204 of the 429 families are singletons with no clustermates to condition on.
The encompassing model converged to a proper solution with 132 free parameters, and the selected structure retained $\eta\to E$ and $\eta\to O$, but omitted $E\to\eta$ and $O\to\eta$ (Figure \ref{fig:midus-search}a).
Relative to it, adding $O\to\eta$ was worse by 13.8 elpd units (paired SE 2.7), and adding both trait-leading paths, which recovers the encompassing model, worse by 27.0 (SE 3.6).
In a separate fit of the selected model the shared unstandardised coefficients were $\widehat\beta_{E\eta}=0.164$ (posterior SD 0.035) and $\widehat\beta_{O\eta}=0.146$ (SD 0.033).
Prediction for a new family therefore supported PWB-leading rather than reciprocal prospective connectivity for both Plasticity traits.

\begin{figure}[htbp]
\centering
\includegraphics[width=\linewidth]{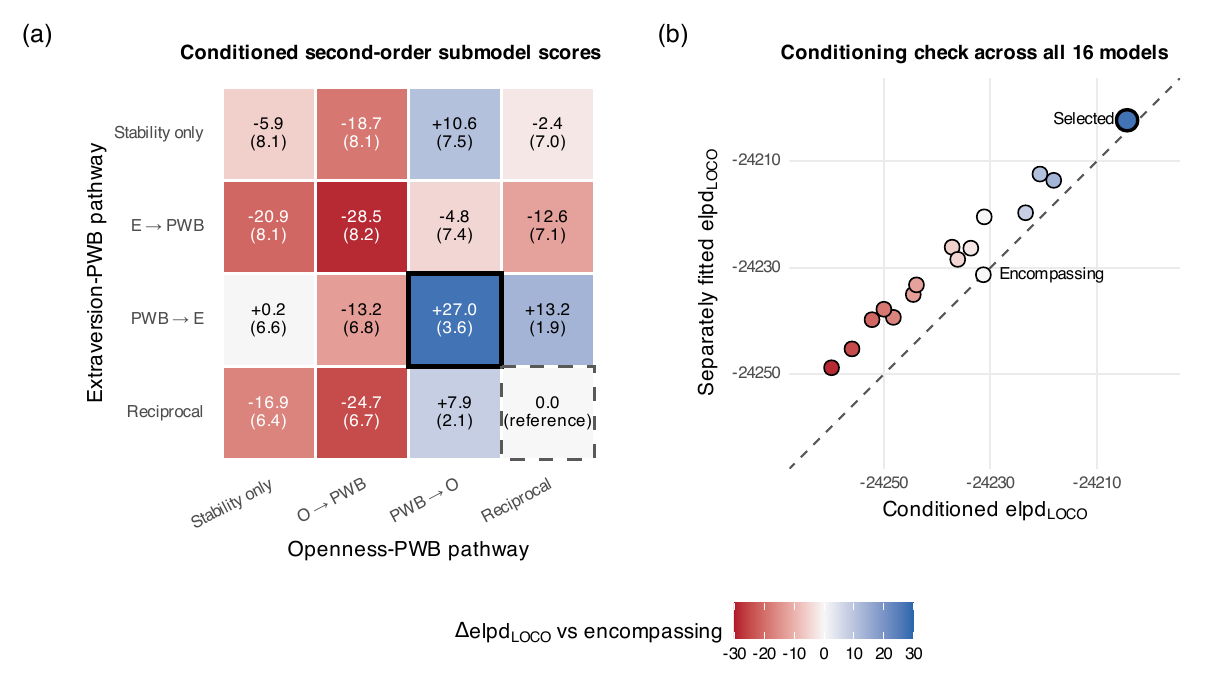}
\caption{The 16 model structures, scored by family-\loco{}.
(a) Conditioned second-order Taylor $\Delta\elpd_{\loco}$ relative to the encompassing model and SE in parentheses, by Extraversion--PWB (rows) and Openness--PWB (columns) pathways; positive values favour the restriction, with the selected structure outlined.
(b) The same scores against their separately fitted counterparts, with the equality line dashed; every conditioned value comes from a single encompassing fit.}
\label{fig:midus-search}
\end{figure}

Separate fits leave that selection unchanged, with rank agreement over all 16 structures at Kendall's $\tau=0.88$.
Conditioned scores were uniformly conservative by an amount that tracked the joint distance \eqref{eq:cond-distance} of the restriction from the encompassing centre. 
The three candidates at $D_m \le 2.11$ were under-scored by 2.0 to 4.5 elpd units, the twelve at $D_m \ge 4.40$ by 7.3 to 12.5---the far-from-centre degradation of Section~\ref{sec:sim-conditioning}, which only affects structures the data already reject, pushing them further out of contention while leaving the decision intact.

A structural hypothesis space can therefore be enumerated by this conditioning strategy rather than narrowed by hand, with every candidate scored on the target the design calls for.
In this case, enumerating all 16 structures revises an earlier finding on these waves. 
\citeauthor{joshanloo2023reciprocal}'s (\citeyear{joshanloo2023reciprocal}) within-person model found psychological well-being and both Plasticity traits to be reciprocally related.
Scored out of family, the trait-leading half of that reciprocity does not pay for itself, and dropping it improves elpd by 27.0 units, so that the leading process for both traits is well-being.
That is a predictive ordering and not a causal one, drawn on a family-relative rather than person-relative decomposition and with E and O observed rather than latent.
It is nonetheless a distinction this criterion can draw, and one that the cost of refitting has until now kept out of reach.

\section{Discussion}
\label{sec:discussion}

The marginal, cluster-level predictive density is the correct target for multilevel SEM whenever a structural theory is meant to generalise to new clusters \parencite{merkle2019bayesian}, but what has been lacking is an affordable way to compute it, since the existing routes either refit the model once per deleted cluster or reweight a posterior sample by importance weights that fail exactly where a held-out cluster is influential enough to matter.
Our work takes
a single \texttt{INLAvaan} fit and returns the \loco{} predictive score for the encompassing model and for every submodel obtained by zeroing structural paths, with no refits, no Monte Carlo over the latent state, and no harmonic-mean variance pathology.
The \loco~cost is one Gaussian conditional update and one $O(Jd^2)$ post-processing pass per scored submodel, small relative to the encompassing fit itself.

The construction replaces one kind of numerical error with another, without changing what is estimated.
Since $\log p(\mathbf{y}_j \mid \mathbf{y}_{-j})$ is a fixed functional of the posterior, a Monte Carlo evaluation returns a random value whose error is sampling noise, while the Taylor score returns a deterministic value whose error is that of the truncation itself.
That error depends on how far deleting cluster $j$ displaces the posterior, hence on whether any one cluster carries an appreciable share of the information in the sample.
In simulation, the truncation error of the second-order Taylor score never exceeds $0.061$ elpd per cluster, and in the two real-data applications the score agrees with MCMC$+$PSIS wherever PSIS is itself reliable and reproduces the selection that separate refits deliver across an enumerated candidate space.

The reach of the construction is delimited by its assumptions.
The closed-form cluster likelihood \eqref{eq:cluster-loglik} requires continuous indicators and Gaussian latents.
The cluster-deleted posterior identity \eqref{eq:cluster-deleted-id} requires only cluster independence given $\bm\vartheta$, and so holds in much broader settings.
The Taylor score of Section~\ref{sec:laplace} requires only a twice-differentiable per-unit log-likelihood and a Gaussian posterior summary $(\bm\vartheta^\ast, \bm\Omega)$, so it generalises to any model for which such a summary can be formed.
In particular, the summary need not come from a Laplace-based optimisation at all.
The mean and covariance of a set of MCMC draws estimate exactly such a summary, and a centre near the posterior mean is in fact preferable to the mode \autocite{vanniekerk2024lowrank}, so moment matching the draws and applying \eqref{eq:laplace-1st}--\eqref{eq:laplace-2nd} verbatim turns the posterior sample of any Bayesian SEM engine into the same Taylor elpd scores, submodel conditioning and existence check included.
For MCMC users this is an alternative route to the cluster-level score that may strain under importance sampling.

Comparing the first- and second-order Taylor scores \eqref{eq:laplace-1st} and \eqref{eq:laplace-2nd} provides a cheap internal check at no additional cost, their gap being the curvature correction the expansion makes for each cluster.
These gaps are small where the truncation has settled, large where it has not, and sum over clusters to $\tfrac12 p_{\mathrm{eff}}$, so a total materially exceeding $\tfrac12 \sum_j \operatorname{tr}(-\bm\Omega \mathbf{H}_j)$, a quantity the same post-processing pass already returns, signals that the expansion has not converged for the sample as a whole.
The existence condition \eqref{eq:existence-2nd} is a second such check.
Its failure indicates breakdown of the local quadratic approximation for that deletion, rather than failure of the exact \loco{} predictive density.
Since $\mathbf{A}_j$ is the approximate cluster-deleted precision, $\mathbf{A}_j \not\succ 0$ means that the deleted summary is improper, the curvature attributed to cluster $j$ meeting or exceeding the entire posterior precision in some direction of the parameter space.

The eigenvalue $k_j$ of Section~\ref{sec:laplace} measures this leverage, which a well-identified model keeps small.
Multilevel SEM are models for fitting two covariance matrices, $\bm\Sigma_{\mathrm{w}}$ to the pooled within-cluster scatter and $\bm\Sigma_{\mathrm{b}}$ to the spread of the cluster means.
Each cluster contributes its scatter $\mathbf{S}_j$ to the former and its mean $\bar{\mathbf{y}}_j$ to the latter, so no parameter element in $\bm\vartheta$ belongs to any one cluster, and a cluster's share of the evidence is roughly $n_j/N$ for $\bm\Sigma_{\mathrm{w}}$ and $1/J$ for $\bm\Sigma_{\mathrm{b}}$.
A single cluster can therefore dominate a direction only by being very large, by lying far from the other cluster means, or by being the only cluster whose data carry information about some parameter at all.
If a violation happens to occur, it is then a prompt to inspect the data (why does cluster $j$ contribute so much curvature?) or the specification (is some parameter identified by cluster $j$ alone, such as a path from a covariate that varies in one cluster only?).

Methodologically, the submodel scoring of Section~\ref{sec:restrictions} is a fast surrogate for projection-predictive variable selection \parencite{piironen2020projective}, but defined directly through the predictive score rather than through a projection onto a reference model.
The rank-one update \eqref{eq:restr-rank1} plays a role analogous to projecting onto a constrained submodel, with closed-form Gaussian conditioning in place of a per-candidate Kullback--Leibler optimisation, and no input beyond the summaries the encompassing fit already produces.
The conditioning step of Section~\ref{sec:restr-laplace} is exact under the Laplace summary of the encompassing posterior and so inherits only that summary's error.
In directions where the posterior is approximately Gaussian the restricted summary is essentially a refit, as borne out by the agreement reported above, whereas in directions where it is highly skewed or multimodal the procedure is best treated as a fast screen, with selected submodels ideally validated by full refits.
Because the conditioning error of Section~\ref{sec:sim-conditioning} is one-sided, that screen errs in one direction only, wrongly excluding a candidate whose restriction sits far from the centre but never spuriously promoting one, and $D_m$ \eqref{eq:cond-distance} names the candidates at risk before any scoring is done.

Three future directions are discussed.
First, genuinely non-nested or differently parameterised alternatives, including candidates that change the prior and so leave the compatible-prior convention of Section~\ref{sec:restrictions}, lie outside the conditioning step and require a second fit.
But that fit costs seconds, its Taylor scores target the same estimand as the first, and the two candidates would compare directly.
Second, the cluster log-likelihood is the only ingredient that changes across designs, so deeper nesting, multigroup data, and incomplete data all enter through the kernel alone, with the deletion identity, the Taylor score, and the conditioning step untouched.
Measurement-invariance constraints \parencite{meredith1993measurement} are linear restrictions on a group-indexed $\bm\vartheta$, so configural, metric, and scalar specifications could in principle be scored by conditioning a single configural fit.
With incomplete data, a full-information kernel \parencite{enders2022applied} scores each unit on the entries it has, valid under missingness at random \parencite{rubin1976inference} provided candidates share the same observed entries.
Third, and most interestingly, non-Gaussian indicators preclude the closed form \eqref{eq:cluster-loglik}, since the latent state cannot be integrated out analytically as in \textcite{liu2025leavegroupout}.
Yet, the rest of the methodology translates seamlessly. 
By simply approximating this outer integral---computing $\ell_j$, $\mathbf{s}_j$, and $\mathbf{H}_j$ through numerical or Laplace integration---the entirety of the downstream pipeline applies exactly as before.

\section*{CRediT Author Statement}

Mohammad Alhyari: Conceptualization, Formal analysis, Investigation, Methodology, Project administration, Writing - Original Draft; Haziq Jamil: Data curation, Formal analysis, Funding acquisition, Investigation, Software, Validation, Visualization, Writing - Original Draft, Writing - Review \& Editing; Hans Montcho: Conceptualization, Writing - Review \& Editing; Håvard Rue: Funding acquisition, Supervision, Validation, Writing - Review \& Editing.

\section*{Acknowledgements}

This publication is based upon work supported by the King Abdullah University of Science and Technology (KAUST) Research Funding Office under Award No. URF/1/6921-01-01.

\section*{Data Availability}

The two empirical datasets used in this article are publicly accessible.
The MIDUS data \autocite{radler2014midlife} are archived in the Midlife in the United States (MIDUS) Series at the Inter-university Consortium for Political and Social Research (ICPSR), and the three waves analysed here are distributed as ICPSR 2760 (MIDUS~1, 1995--1996), ICPSR 4652 (MIDUS~2, 2004--2006), and ICPSR 36346 (MIDUS~3, 2013--2014), available at \url{https://www.icpsr.umich.edu/web/ICPSR/series/203} subject to free registration.
The PISA 2022 data are available from the OECD PISA 2022 Database \autocite{oecd2024pisa}.
All code required to fully reproduce the results is deposited in an open repository at \url{https://osf.io/7au5h/overview?view_only=965ceb3d57c34fbd96689407373f0ea5}, and reported timings are wall-clock on an Apple M4 Pro (14 cores, 24\,GB) under R 4.5.2, with the closed-form scoring passes parallelised over 13 cores.

\printbibliography

@book{bollen1989structural,
  title = {Structural Equations with Latent Variables},
  author = {Bollen, Kenneth A.},
  year = 1989,
  series = {Wiley {{Series}} in {{Probability}} and {{Statistics}}},
  edition = {1},
  publisher = {Wiley},
  doi = {10.1002/9781118619179},
  urldate = {2026-08-31},
  isbn = {978-0-471-01171-2},
  langid = {english}
}

@article{consonni2008compatibility,
  title = {Compatibility of Prior Specifications across Linear Models},
  author = {Consonni, Guido and Veronese, Piero},
  year = 2008,
  journal = {Statistical Science},
  volume = {23},
  number = {3},
  pages = {332--353},
  publisher = {Institute of Mathematical Statistics},
  issn = {0883-4237},
  doi = {10.1214/08-STS258}
}

@article{deyoung2015cybernetic,
  title = {Cybernetic {{Big Five}} Theory},
  author = {DeYoung, Colin G.},
  year = 2015,
  journal = {Journal of Research in Personality},
  volume = {56},
  pages = {33--58},
  issn = {00926566},
  doi = {10.1016/j.jrp.2014.07.004},
  urldate = {2026-05-30},
  langid = {english}
}

@article{dickey1971weighted,
  title = {The Weighted Likelihood Ratio, Linear Hypotheses on Normal Location Parameters},
  author = {Dickey, James M.},
  year = 1971,
  journal = {The Annals of Mathematical Statistics},
  volume = {42},
  number = {1},
  pages = {204--223},
  publisher = {Institute of Mathematical Statistics},
  issn = {0003-4851},
  doi = {10.1214/aoms/1177693507}
}

@article{digman1997higherorder,
  title = {Higher-Order Factors of the {{Big Five}}},
  author = {Digman, John M.},
  year = 1997,
  journal = {Journal of Personality and Social Psychology},
  volume = {73},
  number = {6},
  pages = {1246--1256},
  issn = {1939-1315, 0022-3514},
  doi = {10.1037/0022-3514.73.6.1246},
  urldate = {2026-05-30},
  langid = {english}
}

@book{enders2022applied,
  title = {Applied Missing Data Analysis},
  author = {Enders, Craig K.},
  year = 2022,
  edition = {2},
  publisher = {Guilford Press},
  urldate = {2026-03-22}
}

@article{freni2025graphical,
  title = {A Graphical Framework for Interpretable Correlation Matrix Models for Multivariate Regression},
  author = {{Freni-Sterrantino}, Anna and Rustand, Denis and {van Niekerk}, Janet and Krainski, Elias and Rue, H{\aa}vard},
  year = 2025,
  journal = {Statistical Methods \& Applications},
  volume = {34},
  number = {3},
  pages = {409--447},
  issn = {1618-2510, 1613-981X},
  doi = {10.1007/s10260-025-00788-y},
  urldate = {2025-08-18},
  langid = {english}
}

@misc{freni2026informative,
  title = {Informative Distance-Based Priors for Correlation Matrices Centred on a Target Reference},
  author = {{Freni-Sterrantino}, Anna and {van Niekerk}, Janet and Krainski, Elias Teixeira and Rustand, Denis and Jamil, Haziq and Rue, H{\aa}vard},
  year = 2026,
  number = {arXiv:2607.18926},
  eprint = {2607.18926},
  primaryclass = {stat.ME},
  publisher = {arXiv},
  doi = {10.48550/arXiv.2607.18926},
  urldate = {2026-08-19},
  archiveprefix = {arXiv},
  organization = {arXiv}
}

@incollection{gelfand1992model,
  title = {Model Determination Using Predictive Distributions with Implementation via Sampling-Based Methods},
  booktitle = {Bayesian Statistics 4},
  author = {Gelfand, A. E. and Dey, D. K. and Chang, H.},
  editor = {Bernardo, J. M. and Berger, J. O. and Dawid, A. P. and Smith, A. F. M.},
  year = 1992,
  pages = {147--167},
  publisher = {Oxford University Press},
  doi = {10.1093/oso/9780198522669.003.0009}
}

@book{gelman2026bayesian,
  title = {Bayesian Workflow},
  author = {Gelman, Andrew and Vehtari, Aki and McElreath, Richard and Simpson, Daniel and Margossian, Charles C. and Yao, Yuling and Kennedy, Lauren and Gabry, Jonah and B{\"u}rkner, Paul-Christian and Modr{\'a}k, Martin and Barajas, Vianey Leos},
  year = 2026,
  publisher = {Chapman \& Hall/CRC},
  doi = {10.1201/9781003044024}
}

@article{hsu2015detecting,
  title = {Detecting Misspecified Multilevel Structural Equation Models with Common Fit Indices: {{A Monte Carlo}} Study},
  author = {Hsu, Hsien-Yuan and Kwok, Oi-Man and Lin, Jr-Huang and Acosta, Sandra},
  year = 2015,
  journal = {Multivariate Behavioral Research},
  volume = {50},
  number = {2},
  pages = {197--215},
  publisher = {Informa UK Limited},
  issn = {0027-3171},
  doi = {10.1080/00273171.2014.977429}
}

@article{jacobucci2016regularized,
  title = {Regularized Structural Equation Modeling},
  author = {Jacobucci, Ross and Grimm, Kevin J. and McArdle, John J.},
  year = 2016,
  journal = {Structural Equation Modeling: A Multidisciplinary Journal},
  volume = {23},
  number = {4},
  pages = {555--566},
  issn = {1070-5511, 1532-8007},
  doi = {10.1080/10705511.2016.1154793},
  urldate = {2025-05-16},
  langid = {english}
}

@misc{jamil2026approximate,
  title = {Approximate {{Bayesian}} Inference for Structural Equation Models Using Integrated Nested {{Laplace}} Approximations},
  author = {Jamil, Haziq and Rue, H{\aa}vard},
  year = 2026,
  number = {2603.25690 [stat.ME]},
  eprint = {2603.25690},
  primaryclass = {stat.ME},
  publisher = {arXiv},
  doi = {10.48550/arXiv.2603.25690},
  archiveprefix = {arXiv},
  copyright = {Creative Commons Attribution Non Commercial Share Alike 4.0 International},
  organization = {arXiv}
}

@misc{jamil2026implementation,
  title = {Implementation and Workflows for {{INLA-based}} Approximate {{Bayesian}} Structural Equation Modelling},
  author = {Jamil, Haziq and Rue, H{\aa}vard},
  year = 2026,
  number = {2604.00671 [stat.CO]},
  eprint = {2604.00671},
  primaryclass = {stat.CO},
  publisher = {arXiv},
  doi = {10.48550/arXiv.2604.00671},
  archiveprefix = {arXiv},
  copyright = {Creative Commons Attribution Non Commercial Share Alike 4.0 International},
  langid = {english},
  organization = {arXiv}
}

@article{joshanloo2023reciprocal,
  title = {Reciprocal Relationships between Personality Traits and Psychological Well-Being},
  author = {Joshanloo, Mohsen},
  year = 2023,
  journal = {British Journal of Psychology},
  volume = {114},
  number = {1},
  pages = {54--69},
  publisher = {Wiley},
  issn = {0007-1269},
  doi = {10.1111/bjop.12596}
}

@article{klugkist2007bayes,
  title = {The {{Bayes}} Factor for Inequality and about Equality Constrained Models},
  author = {Klugkist, Irene and Hoijtink, Herbert},
  year = 2007,
  journal = {Computational Statistics \& Data Analysis},
  volume = {51},
  number = {12},
  pages = {6367--6379},
  publisher = {Elsevier BV},
  issn = {0167-9473},
  doi = {10.1016/j.csda.2007.01.024}
}

@report{lachman1997midlife,
  title = {The {{Midlife Development Inventory}} ({{MIDI}}) Personality Scales: {{Scale}} Construction and Scoring},
  author = {Lachman, M. E. and Weaver, S. L.},
  year = 1997,
  institution = {Brandeis University},
  url = {https://www.brandeis.edu/psychology/lachman/pdfs/midi-personality-scales.pdf}
}

@article{liu2025leavegroupout,
  title = {Leave-Group-out Cross-Validation for Latent {{Gaussian}} Models},
  author = {Liu, Zhedong and {van Niekerk}, Janet and Rue, H{\aa}vard},
  year = 2025,
  journal = {SORT-Statistics and Operations Research Transactions},
  volume = {49},
  number = {1},
  pages = {121--146},
  publisher = {Institut d'Estad\'istica de Catalunya},
  address = {ES},
  issn = {1696-2281, 2013-8830},
  doi = {10.57645/20.8080.02.25},
  urldate = {2026-05-03},
  langid = {english}
}

@article{lu2016bayesian,
  title = {Bayesian Factor Analysis as a Variable-Selection Problem: {{Alternative}} Priors and Consequences},
  author = {Lu, Zhao-Hua and Chow, Sy-Miin and Loken, Eric},
  year = 2016,
  journal = {Multivariate Behavioral Research},
  volume = {51},
  number = {4},
  pages = {519--539},
  doi = {10.1080/00273171.2016.1168279}
}

@article{meredith1993measurement,
  title = {Measurement Invariance, Factor Analysis and Factorial Invariance},
  author = {Meredith, William},
  year = 1993,
  journal = {Psychometrika},
  volume = {58},
  number = {4},
  pages = {525--543},
  issn = {1860-0980},
  doi = {10.1007/BF02294825},
  urldate = {2025-09-09},
  langid = {english}
}

@article{merkle2018blavaan,
  title = {{{blavaan}}: {{Bayesian}} Structural Equation Models via Parameter Expansion},
  shorttitle = {Blavaan},
  author = {Merkle, Edgar C. and Rosseel, Yves},
  year = 2018,
  journal = {Journal of Statistical Software},
  volume = {85},
  number = {4},
  pages = {1--30},
  issn = {1548-7660},
  doi = {10.18637/jss.v085.i04},
  urldate = {2023-11-28},
  copyright = {Copyright (c) 2018 Edgar C. Merkle, Yves Rosseel},
  langid = {english}
}

@article{merkle2019bayesian,
  title = {Bayesian Comparison of Latent Variable Models: {{Conditional}} versus Marginal Likelihoods},
  shorttitle = {Bayesian Comparison of Latent Variable Models},
  author = {Merkle, Edgar C. and Furr, Daniel and {Rabe-Hesketh}, Sophia},
  year = 2019,
  journal = {Psychometrika},
  volume = {84},
  number = {3},
  pages = {802--829},
  issn = {0033-3123, 1860-0980},
  doi = {10.1007/s11336-019-09679-0},
  urldate = {2025-12-23},
  langid = {english}
}

@article{merkle2021efficient,
  title = {Efficient {{Bayesian}} Structural Equation Modeling in {{Stan}}},
  author = {Merkle, Edgar C. and Fitzsimmons, Ellen and Uanhoro, James and Goodrich, Ben},
  year = 2021,
  journal = {Journal of Statistical Software},
  volume = {100},
  number = {6},
  pages = {1--22},
  issn = {1548-7660},
  doi = {10.18637/jss.v100.i06},
  urldate = {2024-05-12},
  copyright = {Copyright (c) 2021 Edgar C. Merkle, Ellen Fitzsimmons, James Uanhoro, Ben Goodrich},
  langid = {english}
}

@article{merkle2026nuances,
  title = {Nuances of Information Criteria for {{Bayesian}} Psychometric Models},
  author = {Merkle, Edgar C.},
  year = 2026,
  journal = {Methodology},
  volume = {22},
  number = {3},
  pages = {195--224},
  issn = {1614-2241},
  doi = {10.5964/meth.20361},
  urldate = {2026-08-31}
}

@article{muthen1994multilevel,
  title = {Multilevel Covariance Structure Analysis},
  author = {Muth{\'e}n, Bengt},
  year = 1994,
  journal = {Sociological Methods \& Research},
  volume = {22},
  number = {3},
  pages = {376--398},
  issn = {0049-1241, 1552-8294},
  doi = {10.1177/0049124194022003006},
  urldate = {2025-09-02},
  copyright = {https://journals.sagepub.com/page/policies/text-and-data-mining-license},
  langid = {english}
}

@article{muthen2012bayesian,
  title = {Bayesian Structural Equation Modeling: {{A}} More Flexible Representation of Substantive Theory},
  shorttitle = {Bayesian Structural Equation Modeling},
  author = {Muth{\'e}n, Bengt and Asparouhov, Tihomir},
  year = 2012,
  journal = {Psychological Methods},
  volume = {17},
  number = {3},
  pages = {313--335},
  publisher = {American Psychological Association},
  address = {US},
  issn = {1939-1463},
  doi = {10.1037/a0026802}
}

@book{oecd2024pisa,
  title = {{{PISA}} 2022 Technical Report},
  author = {{OECD}},
  year = 2024,
  publisher = {OECD Publishing},
  address = {Paris},
  doi = {10.1787/01820d6d-en},
  urldate = {2024-09-18},
  langid = {english}
}

@article{penny2013efficient,
  title = {Efficient Posterior Probability Mapping Using {{Savage}}--{{Dickey}} Ratios},
  author = {Penny, William D. and Ridgway, Gerard R.},
  year = 2013,
  journal = {PLOS ONE},
  volume = {8},
  number = {3},
  pages = {e59655},
  publisher = {Public Library of Science (PLoS)},
  issn = {1932-6203},
  doi = {10.1371/journal.pone.0059655}
}

@article{piironen2020projective,
  title = {Projective Inference in High-Dimensional Problems: {{Prediction}} and Feature Selection},
  shorttitle = {Projective Inference in High-Dimensional Problems},
  author = {Piironen, Juho and Paasiniemi, Markus and Vehtari, Aki},
  year = 2020,
  journal = {Electronic Journal of Statistics},
  volume = {14},
  number = {1},
  pages = {2155--2197},
  publisher = {{Institute of Mathematical Statistics and Bernoulli Society}},
  issn = {1935-7524, 1935-7524},
  doi = {10.1214/20-EJS1711},
  urldate = {2026-05-19},
  langid = {english}
}

@article{radler2014midlife,
  title = {The {{Midlife}} in the {{United States}} ({{MIDUS}}) Series: {{A}} National Longitudinal Study of Health and Well-Being},
  author = {Radler, Barry T.},
  year = 2014,
  journal = {Open Health Data},
  volume = {2},
  pages = {e3},
  publisher = {Ubiquity Press, Ltd.},
  issn = {2054-7102},
  doi = {10.5334/ohd.ai}
}

@article{rosseel2012lavaan,
  title = {{{lavaan}}: {{An R}} Package for Structural Equation Modeling},
  shorttitle = {{\textbf{Lavaan}}},
  author = {Rosseel, Yves},
  year = 2012,
  journal = {Journal of Statistical Software},
  volume = {48},
  number = {2},
  pages = {1--36},
  issn = {1548-7660},
  doi = {10.18637/jss.v048.i02},
  urldate = {2023-11-23},
  langid = {english}
}

@article{rosseel2021evaluating,
  title = {Evaluating the Observed Log-Likelihood Function in Two-Level Structural Equation Modeling with Missing Data: {{From}} Formulas to {{R}} Code},
  shorttitle = {Evaluating the Observed Log-Likelihood Function in Two-Level Structural Equation Modeling with Missing Data},
  author = {Rosseel, Yves},
  year = 2021,
  journal = {Psych},
  volume = {3},
  number = {2},
  pages = {197--232},
  issn = {2624-8611},
  doi = {10.3390/psych3020017},
  urldate = {2025-09-02},
  copyright = {https://creativecommons.org/licenses/by/4.0/},
  langid = {english}
}

@article{rubin1976inference,
  title = {Inference and Missing Data},
  author = {Rubin, Donald B.},
  year = 1976,
  journal = {Biometrika},
  volume = {63},
  number = {3},
  pages = {581--592},
  issn = {0006-3444},
  doi = {10.1093/biomet/63.3.581},
  urldate = {2024-02-26}
}

@article{rue2009approximate,
  title = {Approximate {{Bayesian}} Inference for Latent {{Gaussian}} Models by Using Integrated Nested {{Laplace}} Approximations},
  author = {Rue, H{\aa}vard and Martino, Sara and Chopin, Nicolas},
  year = 2009,
  journal = {Journal of the Royal Statistical Society: Series B (Statistical Methodology)},
  volume = {71},
  number = {2},
  pages = {319--392},
  issn = {1369-7412, 1467-9868},
  doi = {10.1111/j.1467-9868.2008.00700.x},
  urldate = {2023-11-27},
  langid = {english}
}

@article{ryff1995structure,
  title = {The Structure of Psychological Well-Being Revisited},
  author = {Ryff, Carol D. and Keyes, Corey Lee M.},
  year = 1995,
  journal = {Journal of Personality and Social Psychology},
  volume = {69},
  number = {4},
  pages = {719--727},
  publisher = {American Psychological Association (APA)},
  issn = {1939-1315},
  doi = {10.1037/0022-3514.69.4.719}
}

@article{ryu2009levelspecific,
  title = {Level-Specific Evaluation of Model Fit in Multilevel Structural Equation Modeling},
  author = {Ryu, Ehri and West, Stephen G.},
  year = 2009,
  journal = {Structural Equation Modeling: A Multidisciplinary Journal},
  volume = {16},
  number = {4},
  pages = {583--601},
  issn = {1070-5511, 1532-8007},
  doi = {10.1080/10705510903203466},
  urldate = {2025-09-02},
  langid = {english}
}

@article{sivula2025uncertainty,
  title = {Uncertainty in {{Bayesian Leave-One-Out Cross-Validation Based Model Comparison}}},
  author = {Sivula, Tuomas and Magnusson, M{\aa}ns and Matamoros, Asael Alonzo and Vehtari, Aki},
  year = 2025,
  journal = {Bayesian Analysis},
  issn = {1936-0975},
  doi = {10.1214/25-BA1569},
  urldate = {2026-09-01}
}

@article{spiegelhalter2002bayesian,
  title = {Bayesian Measures of Model Complexity and Fit},
  author = {Spiegelhalter, David J. and Best, Nicola G. and Carlin, Bradley P. and Van Der Linde, Angelika},
  year = 2002,
  journal = {Journal of the Royal Statistical Society Series B: Statistical Methodology},
  volume = {64},
  number = {4},
  pages = {583--639},
  issn = {1369-7412, 1467-9868},
  doi = {10.1111/1467-9868.00353},
  urldate = {2026-09-01},
  copyright = {https://academic.oup.com/journals/pages/open\_access/funder\_policies/chorus/standard\_publication\_model},
  langid = {english}
}

@article{tierney1986accurate,
  title = {Accurate Approximations for Posterior Moments and Marginal Densities},
  author = {Tierney, Luke and Kadane, Joseph B.},
  year = 1986,
  journal = {Journal of the American Statistical Association},
  volume = {81},
  number = {393},
  pages = {82--86},
  issn = {0162-1459, 1537-274X},
  doi = {10.1080/01621459.1986.10478240},
  urldate = {2026-02-24},
  langid = {english}
}

@article{vanniekerk2023new,
  title = {A New Avenue for {{Bayesian}} Inference with {{INLA}}},
  author = {{van Niekerk}, Janet and Krainski, Elias and Rustand, Denis and Rue, H{\aa}vard},
  year = 2023,
  journal = {Computational Statistics \& Data Analysis},
  volume = {181},
  pages = {107692},
  issn = {01679473},
  doi = {10.1016/j.csda.2023.107692},
  urldate = {2023-11-22},
  langid = {english}
}

@article{vanniekerk2024lowrank,
  title = {Low-Rank Variational {{Bayes}} Correction to the {{Laplace}} Method},
  author = {{van Niekerk}, Janet and Rue, H{\aa}vard},
  year = 2024,
  journal = {Journal of Machine Learning Research},
  volume = {25},
  number = {62},
  pages = {1--25},
  langid = {english}
}

@article{vehtari2017practical,
  title = {Practical {{Bayesian}} Model Evaluation Using Leave-One-out Cross-Validation and {{WAIC}}},
  author = {Vehtari, Aki and Gelman, Andrew and Gabry, Jonah},
  year = 2017,
  journal = {Statistics and Computing},
  volume = {27},
  number = {5},
  pages = {1413--1432},
  issn = {1573-1375},
  doi = {10.1007/s11222-016-9696-4},
  urldate = {2025-12-23},
  langid = {english}
}

@article{vehtari2024pareto,
  title = {Pareto Smoothed Importance Sampling},
  author = {Vehtari, Aki and Simpson, Daniel and Gelman, Andrew and Yao, Yuling and Gabry, Jonah},
  year = 2024,
  journal = {Journal of Machine Learning Research},
  volume = {25},
  number = {72},
  pages = {1--58},
  url = {https://jmlr.org/papers/v25/19-556.html}
}

@article{verdinelli1995computing,
  title = {Computing {{Bayes}} Factors Using a Generalization of the {{Savage}}--{{Dickey}} Density Ratio},
  author = {Verdinelli, Isabella and Wasserman, Larry},
  year = 1995,
  journal = {Journal of the American Statistical Association},
  volume = {90},
  number = {430},
  pages = {614--618},
  publisher = {Informa UK Limited},
  issn = {0162-1459},
  doi = {10.1080/01621459.1995.10476554}
}

@article{watanabe2010asymptotic,
  title = {Asymptotic Equivalence of {{Bayes}} Cross Validation and Widely Applicable Information Criterion in Singular Learning Theory},
  author = {Watanabe, Sumio},
  year = 2010,
  journal = {Journal of Machine Learning Research},
  volume = {11},
  number = {116},
  pages = {3571--3594},
  url = {https://www.jmlr.org/papers/v11/watanabe10a.html}
}

@article{wolpert2012alpha,
  title = {{$\alpha$}-Stable Limit Laws for Harmonic Mean Estimators of Marginal Likelihoods},
  author = {Wolpert, Robert L. and Schmidler, Scott C.},
  year = 2012,
  journal = {Statistica Sinica},
  volume = {22},
  number = {3},
  pages = {1233--1251},
  publisher = {Statistica Sinica (Institute of Statistical Science)},
  issn = {1017-0405},
  doi = {10.5705/ss.2010.221}
}

\appendix

\section{Joint and Conditional Treatment of Exogenous Covariates}
\label{app:covariates}

The main text models exogenous covariates jointly with the indicators.
Software offers a second convention, and this appendix details the practical consequences of each choice.
Under the joint formulation (\texttt{fixed.x = FALSE}) the covariates receive a saturated Gaussian block and are modelled alongside the indicators; under the conditional formulation (\texttt{fixed.x = TRUE}, the \texttt{lavaan} default) their moments are held at the sample values $(\bar{\mathbf{x}}, \mathbf{S}_x)$ and only the outcomes are modelled.
The saturated block is variation-independent of the structural parameters, so the two return identical structural estimates, and Figure~\ref{fig-fixedx} shows them side by side.

\begin{figure}[htb]
\centering
\begin{subfigure}[b]{0.47\linewidth}
\centering
\begin{tikzpicture}[>=stealth]
\node[lat] (eta) at (0,0)       {$\eta$};
\node[obs] (x)   at (-2.1,0)    {$x$};
\node[obs] (y1)  at (-1.2,-1.7) {$y_1$};
\node[obs] (y2)  at (0,-1.7)    {$y_2$};
\node[obs] (y3)  at (1.2,-1.7)  {$y_3$};

\draw[->] (x) -- node[lbl, above] {$\gamma$} (eta);
\draw[->] (eta) -- node[lbl, left, pos=0.6] {$\lambda_1$} (y1);
\draw[->] (eta) -- node[lbl, right, pos=0.6] {$\lambda_2$} (y2);
\draw[->] (eta) -- node[lbl, right, pos=0.6] {$\lambda_3$} (y3);

\draw[<->] ($(x.north)+(-1.6mm,0)$) to[out=115,in=65,looseness=5]
  node[lbl, above] {$\phi$} ($(x.north)+(1.6mm,0)$);

\draw[<-] (eta) -- ++(0.95,0.5) node[lbl, right] {$\psi$};
\draw[<-] (y1) -- ++(0,-0.8);
\draw[<-] (y2) -- ++(0,-0.8);
\draw[<-] (y3) -- ++(0,-0.8);
\end{tikzpicture}
\vspace{2mm}
\caption{Joint (\texttt{fixed.x = FALSE}):\\ scores $\log p(y_j, \mathbf{x}_j \mid \mathcal{D}_{-j})$.}
\end{subfigure}
\hfill
\begin{subfigure}[b]{0.47\linewidth}
\centering
\begin{tikzpicture}[>=stealth]
\node[lat]   (eta) at (0,0)       {$\eta$};
\node[given] (x)   at (-2.1,0)    {$x$};
\node[obs]   (y1)  at (-1.2,-1.7) {$y_1$};
\node[obs]   (y2)  at (0,-1.7)    {$y_2$};
\node[obs]   (y3)  at (1.2,-1.7)  {$y_3$};

\draw[->] (x) -- node[lbl, above] {$\gamma$} (eta);
\draw[->] (eta) -- node[lbl, left, pos=0.6] {$\lambda_1$} (y1);
\draw[->] (eta) -- node[lbl, right, pos=0.6] {$\lambda_2$} (y2);
\draw[->] (eta) -- node[lbl, right, pos=0.6] {$\lambda_3$} (y3);

\node[lbl, above=2.5pt of x] {given};

\draw[<-] (eta) -- ++(0.95,0.5) node[lbl, right] {$\psi$};
\draw[<-] (y1) -- ++(0,-0.8);
\draw[<-] (y2) -- ++(0,-0.8);
\draw[<-] (y3) -- ++(0,-0.8);
\end{tikzpicture}
\vspace{2mm}
\caption{Conditional (\texttt{fixed.x = TRUE}):\\ scores $\log p(y_j \mid \mathbf{x}_j, \mathcal{D}_{-j})$.}
\end{subfigure}
\captionsetup{justification=justified}
\caption{The two covariate formulations, for a toy model with one covariate, one latent construct, and three indicators. Shading marks the variables to which the model assigns a distribution. Every parameter is common to both panels and estimated identically. The formulations differ only in whether the covariate carries parameters of its own, here the variance $\phi$.}
\label{fig-fixedx}
\end{figure}

The two part company in what they predict.
Writing $\mathbf{x}_j$ for the held-out cluster's covariates and $\mathcal{D}_{-j}$ for all data with cluster $j$ deleted in its entirety, the leave-one-cluster-out estimands are
\begin{equation*}
\elpd^{\mathrm{J}} = \sum_{j=1}^J \log p(\mathbf{y}_j, \mathbf{x}_j \mid \mathcal{D}_{-j}),
\qquad
\elpd^{\mathrm{C}} = \sum_{j=1}^J \log p(\mathbf{y}_j \mid \mathbf{x}_j, \mathcal{D}_{-j}),
\end{equation*}
linked pointwise by the predictive identity
\begin{equation}
\log p(\mathbf{y}_j, \mathbf{x}_j \mid \mathcal{D}_{-j}) = \log p(\mathbf{y}_j \mid \mathbf{x}_j, \mathcal{D}_{-j}) + \log p(\mathbf{x}_j \mid \mathcal{D}_{-j}).
\label{eq:cov-identity}
\end{equation}
The joint estimand scores a new cluster drawn from the population, covariates included; the conditional estimand scores that cluster's outcomes at a known covariate profile.
This is a different distinction from the conditional-versus-marginal treatment of the latent effect in Section~\ref{sec:setting}, where the quantity conditioned upon is unobserved and estimated from the very outcomes being scored, rather than observed and ancillary.

Two consequences follow from \eqref{eq:cov-identity}.
Firstly, candidate models carrying the same covariate block share the term $\log p(\mathbf{x}_j \mid \mathcal{D}_{-j})$, which cancels pointwise from their paired contrast, so the two formulations return identical $\elpd$ differences and identical paired standard errors.
Secondly, a joint $\elpd$ scores the pair $(\mathbf{y}_j, \mathbf{x}_j)$ and a conditional one
scores $\mathbf{y}_j$ alone, so the two are never comparable (two joint scores are comparable only when they cover the same variables).
Covariate selection under the joint formulation therefore deletes paths rather than variables, whereas under the conditional formulation candidate models may differ in their covariate sets.

Under the conditional formulation the covariate moments are data-dependent constants inside the likelihood, frozen at their full-sample values $(\bar{\mathbf{x}}, \mathbf{S}_x)$.
Deleting cluster $j$ would move them to $(\bar{\mathbf{x}}_{-j}, \mathbf{S}_{x,-j})$ and so disturb every remaining factor, rather than simply removing the $j$th, which is exactly what the reweighting identity of Proposition~\ref{prop:harmonic-mean} does not allow.
However, the following lemma proves that the conditional density does not involve the covariate moments $(\bm\mu_x, \bm\Sigma_{xx})$ at all, so the values at which they are frozen are immaterial.

\begin{lemma}[Invariance to the frozen block]
\label{lem:frozen}
Let the covariates enter the structural model through a coefficient matrix $\bm\Gamma$, so that $\bm\eta = \bm\alpha + \bm\Gamma \mathbf{x} + \mathbf{B} \bm\eta + \bm\zeta$.
Then the model-implied conditional density $p(\mathbf{y} \mid \mathbf{x}, \bm\vartheta)$ does not depend on the frozen covariate moments $(\bm\mu_x, \bm\Sigma_{xx})$.
\end{lemma}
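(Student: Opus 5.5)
The plan is to solve the structural equation for $\bm\eta$ explicitly and read off $p(\mathbf{y} \mid \mathbf{x}, \bm\vartheta)$ as a Gaussian whose moments are functions of $\mathbf{x}$ and of the non-covariate parameters only. With $\mathbf{I} - \mathbf{B}$ invertible, $\bm\eta = (\mathbf{I} - \mathbf{B})^{-1}(\bm\alpha + \bm\Gamma \mathbf{x} + \bm\zeta)$. Substituting into the measurement equation $\mathbf{y} = \bm\nu + \bm\Lambda \bm\eta + \bm\epsilon$ expresses $\mathbf{y}$ as an affine function of $\mathbf{x}$ plus the noise term $\bm\Lambda(\mathbf{I} - \mathbf{B})^{-1}\bm\zeta + \bm\epsilon$. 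The structural assumption I will state and use is that $(\bm\zeta, \bm\epsilon)$ is independent of $\mathbf{x}$, which is the exogeneity built into the model. Conditioning on $\mathbf{x}$ then only fixes the affine part. This gives
\begin{equation*}
\mathbf{y} \mid \mathbf{x}, \bm\vartheta \sim \mathcal{N}_p\bigl(\bm\nu + \bm\Lambda(\mathbf{I}-\mathbf{B})^{-1}(\bm\alpha + \bm\Gamma\mathbf{x}),\; \bm\Lambda(\mathbf{I}-\mathbf{B})^{-1}\bm\Psi(\mathbf{I}-\mathbf{B})^{-\top}\bm\Lambda^\top + \bm\Theta\bigr),
\end{equation*}
the conditional analogue of the reduced form \eqref{eq:reduced-single}. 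Neither moment contains $\bm\mu_x$ or $\bm\Sigma_{xx}$.

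As a cross-check, and to match how software builds the likelihood, I would then reconcile this with the joint route. The joint model-implied moments of $(\mathbf{y},\mathbf{x})$ have $\bm\Sigma_{yx} = \bm\Lambda(\mathbf{I}-\mathbf{B})^{-1}\bm\Gamma\bm\Sigma_{xx}$, $\bm\mu_y = \bm\nu + \bm\Lambda(\mathbf{I}-\mathbf{B})^{-1}(\bm\alpha + \bm\Gamma\bm\mu_x)$, and $\bm\Sigma_{yy}$ equal to the conditional covariance above plus $\bm\Lambda(\mathbf{I}-\mathbf{B})^{-1}\bm\Gamma\bm\Sigma_{xx}\bm\Gamma^\top(\mathbf{I}-\mathbf{B})^{-\top}\bm\Lambda^\top$. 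Gaussian conditioning uses $\bm\Sigma_{yx}\bm\Sigma_{xx}^{-1} = \bm\Lambda(\mathbf{I}-\mathbf{B})^{-1}\bm\Gamma$, so $\bm\Sigma_{xx}$ cancels. The conditional mean becomes $\bm\mu_y + \bm\Lambda(\mathbf{I}-\mathbf{B})^{-1}\bm\Gamma(\mathbf{x}-\bm\mu_x)$, in which the $\bm\mu_x$ terms cancel. The Schur complement $\bm\Sigma_{yy} - \bm\Sigma_{yx}\bm\Sigma_{xx}^{-1}\bm\Sigma_{xy}$ removes exactly the $\bm\Gamma\bm\Sigma_{xx}\bm\Gamma^\top$ term. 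This confirms the direct derivation.

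For the two-level setting of the main text, I would run the same argument on the cluster reduced form \eqref{eq:cluster-reduced} using the within/between split. The covariate's within and between components feed $\bm\Sigma_{\mathrm{w}}$ and $\bm\Sigma_{\mathrm{b}}$ through level-specific $\bm\Gamma^{\mathrm{w}}$ and $\bm\Gamma^{\mathrm{b}}$. Conditioning the stacked $\mathbf{y}_j$ on the stacked $\mathbf{x}_j$ then decouples along the two orthogonal projectors, and the single-level cancellation applies on each subspace.

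The main obstacle is not the algebra, which is short. It is fixing precisely what ``does not depend'' covers. The frozen block must be variation-independent of $(\bm\Lambda, \mathbf{B}, \bm\Gamma, \bm\Psi, \bm\Theta, \bm\alpha, \bm\nu)$. No disturbance may covary with $\mathbf{x}$; otherwise $\bm\Sigma_{yx}$ would carry a term not proportional to $\bm\Sigma_{xx}$ and the cancellation would fail. Also, $\bm\Sigma_{xx}$ must be invertible for the joint-route check. I would state these as the standing exogeneity conditions implicit in the \texttt{fixed.x} convention before giving the computation.
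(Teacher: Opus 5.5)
\textbf{Single-level lemma.} Your single-level argument is correct, and your ``cross-check'' is exactly the paper's proof. It computes the joint model-implied moments of $(\mathbf{y},\mathbf{x})$, cancels $\bm\Sigma_{xx}$ in $\bm\Sigma_{yx}\bm\Sigma_{xx}^{-1}$ and then $\bm\mu_x$ in the conditional mean, and uses the Schur complement to remove the $\bm\Gamma\bm\Sigma_{xx}\bm\Gamma^\top$ term.

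Your primary route is genuinely different and more general. You solve for $\bm\eta$ and condition directly, using independence of $(\bm\zeta,\bm\epsilon)$ from $\mathbf{x}$. This never uses the law of $\mathbf{x}$, so it needs neither Gaussian covariates nor an invertible $\bm\Sigma_{xx}$. That generality is what lets the conditional formulation accommodate non-Gaussian covariates in the paper's Table~\ref{tbl-cov-choice}. The paper's route instead works on the object software actually builds: the joint Gaussian with moments frozen at $(\bar{\mathbf{x}},\mathbf{S}_x)$. It shows that conditioning this Gaussian is insensitive to the frozen values. Together, your two routes settle the lemma as stated.

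\textbf{Two-level extension.} Your two-level paragraph does not go through as written. The projector decomposition does decouple the conditioning. On each of the $n_j-1$ within-subspace blocks the joint covariance is the within-level reduced form, so the cancellation works there. On the between subspace, however, the block covariance of $(\sqrt{n_j}\,\bar{\mathbf{y}}_j,\sqrt{n_j}\,\bar{\mathbf{x}}_j)$ is $\bm\Sigma_{\mathrm{w}}+n_j\bm\Sigma_{\mathrm{b}}$. This is a sum of two reduced forms with different coefficient maps. Write $\mathbf{M}^{l}=\bm\Lambda^{l}(\mathbf{I}-\mathbf{B}^{l})^{-1}$ for $l\in\{\mathrm{w},\mathrm{b}\}$. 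The regression coefficient on that block is then
\[
\bigl(\mathbf{M}^{\mathrm{w}}\bm\Gamma^{\mathrm{w}}\bm\Sigma^{\mathrm{w}}_{xx}+n_j\,\mathbf{M}^{\mathrm{b}}\bm\Gamma^{\mathrm{b}}\bm\Sigma^{\mathrm{b}}_{xx}\bigr)\bigl(\bm\Sigma^{\mathrm{w}}_{xx}+n_j\,\bm\Sigma^{\mathrm{b}}_{xx}\bigr)^{-1}.
\]
This depends on the frozen covariate blocks unless $\mathbf{M}^{\mathrm{w}}\bm\Gamma^{\mathrm{w}}=\mathbf{M}^{\mathrm{b}}\bm\Gamma^{\mathrm{b}}$ or one of $\bm\Sigma^{\mathrm{w}}_{xx},\bm\Sigma^{\mathrm{b}}_{xx}$ vanishes. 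The cancellation of $\bm\mu_x$ in the conditional mean fails along with it.

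A minimal case shows the failure. Take a singleton cluster, scalar $y$ and $x$, and unit measurement maps. The slope of $\E[y\mid x]$ is $(\gamma_{\mathrm{w}}\sigma^2_{\mathrm{w}}+\gamma_{\mathrm{b}}\sigma^2_{\mathrm{b}})/(\sigma^2_{\mathrm{w}}+\sigma^2_{\mathrm{b}})$, which plainly involves the frozen variances.

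The underlying reason is that conditioning on the observed $\bar{\mathbf{x}}_j$ does not reveal how it splits into its between component and its averaged within component. That split is governed by exactly the moments you want to drop. Restrict the extension to a covariate living at a single level, for instance a cluster-level covariate whose frozen block is its between-level marginal. That is the case the paper's two-level remark addresses, and it is why Table~\ref{tbl-cov-choice} excludes within--between decomposition of a covariate under the conditional formulation.
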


\begin{proof}
Solving the structural equation and substituting into \eqref{eq:meas-single}, the model-implied moments of the enlarged vector $(\mathbf{y}^\top, \mathbf{x}^\top)^\top$ are
\begin{align*}
\bm\mu_y &= \bm\nu + \bm\Lambda (\mathbf{I} - \mathbf{B})^{-1} (\bm\alpha + \bm\Gamma \bm\mu_x), \\
\bm\Sigma_{yx} &= \bm\Lambda (\mathbf{I} - \mathbf{B})^{-1} \bm\Gamma \bm\Sigma_{xx}, \\
\bm\Sigma_{yy} &= \bm\Lambda (\mathbf{I} - \mathbf{B})^{-1} \bigl( \bm\Gamma \bm\Sigma_{xx} \bm\Gamma^\top + \bm\Psi \bigr) (\mathbf{I} - \mathbf{B})^{-\top} \bm\Lambda^\top + \bm\Theta .
\end{align*}
The conditional regression coefficient is $\bm\Sigma_{yx} \bm\Sigma_{xx}^{-1} = \bm\Lambda (\mathbf{I} - \mathbf{B})^{-1} \bm\Gamma$, in which $\bm\Sigma_{xx}$ has cancelled, so the conditional mean is
\begin{align*}
\bm\mu_y + \bm\Sigma_{yx} \bm\Sigma_{xx}^{-1} (\mathbf{x} - \bm\mu_x)
&= \bm\nu + \bm\Lambda (\mathbf{I} - \mathbf{B})^{-1} (\bm\alpha + \bm\Gamma \bm\mu_x) + \bm\Lambda (\mathbf{I} - \mathbf{B})^{-1} \bm\Gamma (\mathbf{x} - \bm\mu_x) \\
&= \bm\nu + \bm\Lambda (\mathbf{I} - \mathbf{B})^{-1} (\bm\alpha + \bm\Gamma \mathbf{x}),
\end{align*}
in which $\bm\mu_x$ has cancelled as well.
The conditional covariance is the Schur complement
\begin{align*}
\bm\Sigma_{yy} - \bm\Sigma_{yx} \bm\Sigma_{xx}^{-1} \bm\Sigma_{xy}
&= \bm\Lambda (\mathbf{I} - \mathbf{B})^{-1} \bigl( \bm\Gamma \bm\Sigma_{xx} \bm\Gamma^\top + \bm\Psi \bigr) (\mathbf{I} - \mathbf{B})^{-\top} \bm\Lambda^\top + \bm\Theta \\
&\qquad - \bm\Lambda (\mathbf{I} - \mathbf{B})^{-1} \bm\Gamma \bm\Sigma_{xx} \bm\Gamma^\top (\mathbf{I} - \mathbf{B})^{-\top} \bm\Lambda^\top,
\end{align*}
which reduces to $\bm\Lambda (\mathbf{I} - \mathbf{B})^{-1} \bm\Psi (\mathbf{I} - \mathbf{B})^{-\top} \bm\Lambda^\top + \bm\Theta$, the reduced-form covariance \eqref{eq:reduced-single} of the outcomes alone.
\end{proof}

In the two-level case the frozen block is the between-level marginal of the covariates, and since each $\bm\Sigma_l(\bm\vartheta)$ is the level-$l$ instance of the same reduced form, the cancellation applies within each level and hence to the cluster covariance $\mathbf{V}_j(\bm\vartheta)$ assembled from them.
The conditional cluster log-likelihood is therefore the joint one shifted by a $\bm\vartheta$-free constant, leaving $\mathbf{s}_j$ and $\mathbf{H}_j$ unchanged, so the conditional Taylor score inherits exactly the error of the joint one.
Everything above transfers to the single-row case (\loso) of Section~\ref{sec:loso-main} on replacing $(\mathbf{y}_j, \mathbf{x}_j)$ by $(\mathbf{y}_i, \mathbf{x}_i)$.

In summary, the two formulations are interchangeable wherever candidate models are compared over a common set of variables, as they are throughout this paper, and are otherwise separated by modelling choice, not scoring, as Table~\ref{tbl-cov-choice} sets out.

\begin{table}[tb]
\centering
\captionsetup{justification=justified}
\caption{Which formulation admits which use; joint is \texttt{fixed.x = FALSE} and conditional \texttt{fixed.x = TRUE}. On the first row, the case throughout this paper, the two do not merely both apply but agree exactly. A covariate decomposed across levels, or observed incompletely, must be given a distribution; a non-Gaussian one must not.}
\label{tbl-cov-choice}
\small
\begin{tabular}{lcc}
\toprule
 & Joint & Conditional \\
\midrule
Comparisons over a common covariate set & $\checkmark$ & $\checkmark$ \\
Comparisons over differing covariate sets & $\times$ & $\checkmark$ \\
Within- and between-cluster decomposition of a covariate & $\checkmark$ & $\times$ \\
Incompletely observed covariates & $\checkmark$ & $\times$ \\
Non-Gaussian covariates & $\times$ & $\checkmark$ \\
\bottomrule
\end{tabular}
\end{table}

\section{Proof of the Laplace Approximation of the Cluster CPO}
\label{app:laplace-proof}

\begin{proof}[Proof of Theorem~\ref{prop:laplace-2nd}]
Substituting the second-order expansion of $\ell_j$ into $I_j$ and the Gaussian density gives
\[
\hat{I}_j^{(2)} = \frac{e^{-\ell_j(\bm\vartheta^\ast)}}{(2\pi)^{d/2} |\bm\Omega|^{1/2}} \int \exp \Bigl(-\mathbf{s}_j^\top \bm\delta - \tfrac{1}{2}\, \bm\delta^\top \mathbf{A}_j \bm\delta\Bigr) d\bm\delta.
\]
Setting $\mathbf{H}_j = 0$ gives $\hat{I}_j^{(1)}$ with $\mathbf{A}_j$ replaced by $\bm\Omega^{-1}$, so the two truncations differ only in which precision matrix enters the Gaussian integral tilted by the linear form $-\mathbf{s}_j^\top \bm\delta$.
Diagonalising $\mathbf{A}_j = \mathbf{U} \operatorname{diag}(a_1, \dots, a_d) \mathbf{U}^\top$ with $\mathbf{U}$ orthogonal and substituting $\mathbf{z} := \mathbf{U}^\top \bm\delta$ factorises the integral into the $d$ scalar integrals $\int \exp\bigl(-t_i z_i - \tfrac{1}{2} a_i z_i^2\bigr) dz_i$ with $\mathbf{t} := \mathbf{U}^\top \mathbf{s}_j$, each finite exactly when $a_i > 0$ and divergent otherwise, whatever the value of $t_i$.
The integral therefore converges if and only if $\mathbf{A}_j \succ 0$, and congruence by $\bm\Omega^{1/2}$ turns this into the symmetric form on the right of \eqref{eq:existence-2nd}, since $-\bm\Omega^{1/2} \mathbf{H}_j \bm\Omega^{1/2}$ is symmetric with the same eigenvalues as $-\bm\Omega \mathbf{H}_j$.
At $\mathbf{H}_j = 0$, $\mathbf{A}_j = \bm\Omega^{-1} \succ 0$, so the integral is finite for every finite $\mathbf{s}_j$.

Completing the square, $\bm\delta^\top \mathbf{A}_j \bm\delta + 2 \mathbf{s}_j^\top \bm\delta = (\bm\delta + \mathbf{A}_j^{-1} \mathbf{s}_j)^\top \mathbf{A}_j (\bm\delta + \mathbf{A}_j^{-1} \mathbf{s}_j) - \mathbf{s}_j^\top \mathbf{A}_j^{-1} \mathbf{s}_j$, so the integral evaluates to $(2\pi)^{d/2} |\mathbf{A}_j|^{-1/2} \exp\bigl(\tfrac{1}{2} \mathbf{s}_j^\top \mathbf{A}_j^{-1} \mathbf{s}_j\bigr)$.
The determinant ratio is
\[
\frac{|\mathbf{A}_j|^{-1/2}}{|\bm\Omega|^{1/2}} = \bigl| \mathbf{A}_j \bm\Omega \bigr|^{-1/2} = \bigl|
\mathbf{I} + \bm\Omega \mathbf{H}_j \bigr|^{-1/2},
\]
the last equality using $\mathbf{A}_j \bm\Omega = \mathbf{I} + \mathbf{H}_j \bm\Omega$ and $|\mathbf{I} + \mathbf{H}_j \bm\Omega|
= |\mathbf{I} + \bm\Omega \mathbf{H}_j|$ (Sylvester's identity).
Hence, under \eqref{eq:existence-2nd}, $\hat{I}_j^{(2)} = \exp\bigl(-\ell_j(\bm\vartheta^\ast) + \tfrac{1}{2} \mathbf{s}_j^\top \mathbf{A}_j^{-1} \mathbf{s}_j\bigr) \cdot |\mathbf{I} + \bm\Omega \mathbf{H}_j|^{-1/2}$, where the determinant $|\mathbf{I} + \bm\Omega \mathbf{H}_j| = |\bm\Omega| \, |\mathbf{A}_j|$ is positive, so the log-determinant term is real, and $-\log \hat{I}_j^{(2)}$ gives \eqref{eq:laplace-2nd}.
At $\mathbf{H}_j = 0$, $\mathbf{A}_j = \bm\Omega^{-1}$ and $|\mathbf{I} + \bm\Omega \mathbf{H}_j| = 1$, so \eqref{eq:laplace-2nd} reduces to \eqref{eq:laplace-1st} exactly, as in the case of (i).
\end{proof}

\section{Marginal WAIC at First Order}
\label{app:waic}

The marginal WAIC targets the same leave-one-cluster-out quantity as the scores of Section~\ref{sec:laplace} \parencite{merkle2019bayesian}, and is usually computed from posterior draws.
Under a Gaussian density it needs none.
Both of its halves---the log pointwise predictive density $\lppd_j := \log \E_{\bm\vartheta \mid \mathbf{y}}\bigl[\, p(\mathbf{y}_j \mid \bm\vartheta) \,\bigr]$ and the penalty $p_{\mathrm{WAIC},j} := \Var_{\bm\vartheta \mid \mathbf{y}}\bigl[\, \ell_j(\bm\vartheta) \,\bigr]$ \parencite{watanabe2010asymptotic,vehtari2017practical}---are expectations of the same expansions of $\ell_j$ against the same $\mathcal{N}(\bm\vartheta^\ast, \bm\Omega)$, and so are functions of $(\mathbf{s}_j, \mathbf{H}_j, \bm\Omega)$ alone.
At first order they return the score of Theorem~\ref{prop:laplace-1st}(i) exactly.

\begin{corollary}[Marginal WAIC at first order]
\label{cor:waic-1st}
Under the Gaussian posterior approximation and the first-order expansion of Theorem~\ref{prop:laplace-1st}(i),
\begin{equation*}
\lppd_j = \ell_j(\bm\vartheta^\ast) + \tfrac{1}{2}\, \mathbf{s}_j^\top \bm\Omega \mathbf{s}_j,
\qquad
p_{\mathrm{WAIC},j} = \mathbf{s}_j^\top \bm\Omega \mathbf{s}_j,
\end{equation*}
so that $\lppd_j \!-\, p_{\mathrm{WAIC},j}$ is the first-order \loco{} score \eqref{eq:laplace-1st}.
\end{corollary}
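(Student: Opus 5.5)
Under the Gaussian summary $\mathcal{N}(\bm\vartheta^\ast, \bm\Omega)$ and the first-order truncation $\ell_j(\bm\vartheta) \approx \ell_j(\bm\vartheta^\ast) + \mathbf{s}_j^\top \bm\delta$, the quantity $\ell_j$ becomes an affine function of the Gaussian vector $\bm\delta \sim \mathcal{N}(\mathbf{0}, \bm\Omega)$. The plan is to compute both halves of the WAIC as moments of this scalar Gaussian and then subtract.

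First, I will establish that $\ell_j(\bm\vartheta) \sim \mathcal{N}\bigl(\ell_j(\bm\vartheta^\ast),\, \mathbf{s}_j^\top \bm\Omega \mathbf{s}_j\bigr)$ under the truncated model. This follows at once from linearity: $\mathbf{s}_j^\top\bm\delta$ has mean zero and variance $\mathbf{s}_j^\top \bm\Omega \mathbf{s}_j$. The argument uses the fact, already noted after Theorem~\ref{prop:laplace-1st}, that the expansion is centred at the mean of the Gaussian summary.

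Second, I will read off the penalty directly: $p_{\mathrm{WAIC},j} = \Var[\ell_j] = \mathbf{s}_j^\top \bm\Omega \mathbf{s}_j$.

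Third, I will obtain $\lppd_j$ from the lognormal moment identity $\E[e^{X}] = e^{\mu + \sigma^2/2}$ for $X \sim \mathcal{N}(\mu,\sigma^2)$. This gives $\E[p(\mathbf{y}_j\mid\bm\vartheta)] = \exp\bigl(\ell_j(\bm\vartheta^\ast) + \tfrac12 \mathbf{s}_j^\top\bm\Omega\mathbf{s}_j\bigr)$. Equivalently, this is the Gaussian integral in the proof of Theorem~\ref{prop:laplace-1st} with the sign of $\mathbf{s}_j$ reversed. The completed square is unchanged by that sign flip, so the same calculation from Appendix~\ref{app:laplace-proof} applies with $\mathbf{H}_j = 0$. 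Taking logarithms gives the stated $\lppd_j$. Existence is not an issue here, since $\bm\Omega^{-1} \succ 0$.

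Finally, I will subtract: $\lppd_j - p_{\mathrm{WAIC},j} = \ell_j(\bm\vartheta^\ast) - \tfrac12 \mathbf{s}_j^\top\bm\Omega\mathbf{s}_j$, which is exactly \eqref{eq:laplace-1st}.

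The only delicate point is interpretive rather than computational. The corollary must be read with the same first-order expansion substituted inside both the expectation of $p = e^{\ell_j}$ and the variance, so that the moment-generating identity applies exactly and not only asymptotically. I will state this convention explicitly.
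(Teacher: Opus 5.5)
Your proposal is correct and follows the paper's proof essentially verbatim: the first-order truncation makes $\ell_j$ Gaussian with mean $\ell_j(\bm\vartheta^\ast)$ and variance $\mathbf{s}_j^\top\bm\Omega\mathbf{s}_j$. Then $\lppd_j$ is its cumulant generating function at one (your lognormal identity) and $p_{\mathrm{WAIC},j}$ is its variance. One small point: in your alternative route via Appendix~\ref{app:laplace-proof}, the constant $\ell_j(\bm\vartheta^\ast)$ changes sign along with $\mathbf{s}_j$, which is harmless because the quadratic term $\tfrac12\mathbf{s}_j^\top\bm\Omega\mathbf{s}_j$ is unaffected.
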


\begin{proof}
Under the first-order expansion, $\ell_j(\bm\vartheta) = \ell_j(\bm\vartheta^\ast) + \mathbf{s}_j^\top \bm\delta$ is linear in $\bm\delta \sim \mathcal{N}(0, \bm\Omega)$, hence Gaussian with mean $\ell_j(\bm\vartheta^\ast)$ and variance $\mathbf{s}_j^\top \bm\Omega \mathbf{s}_j$.
Consequently, $\lppd_j$ is its cumulant generating function evaluated at one, $p_{\mathrm{WAIC},j}$ its variance, and their difference yields \eqref{eq:laplace-1st}.
\end{proof}

The resulting penalty $\tfrac{1}{2} p_{\mathrm{WAIC},j}$ in \eqref{eq:laplace-1st} is thus a Jensen-style one, falling hardest on clusters whose likelihood is most sensitive to $\bm\vartheta$.
Summed over clusters, $\sum_j p_{\mathrm{WAIC},j} = \operatorname{tr}(\bm\Omega \sum_j \mathbf{s}_j \mathbf{s}_j^\top)$, the outer-product form of the information, which agrees with the $p_{\mathrm{eff}}$ of \eqref{eq:peff-gap} to the same $O(J^{-1})$ under correct model specification.
This sum is also precisely what the \texttt{loo} package reports as $p_{\mathrm{loo}}$ \autocite{vehtari2017practical}, and in that case, estimated using PSIS.

\section{Proof of the Compatible-Prior Identity}
\label{app:compat}

\begin{proof}[Proof of Lemma~\ref{lem:submodel-id}]
By Bayes' theorem, $\pi(\bm\vartheta_{-a} \mid \mathbf{y}, \vartheta_a = 0) \propto p(\mathbf{y} \mid 0, \bm\vartheta_{-a})\,\pi(0, \bm\vartheta_{-a})$.
Factor the joint prior on the constraint slice as $\pi(0, \bm\vartheta_{-a}) = \pi_a(0)\, \pi(\bm\vartheta_{-a} \mid \vartheta_a = 0)$.
The marginal prior mass $\pi_a(0)$ is constant in $\bm\vartheta_{-a}$ and absorbs into the normalising constant.
The submodel refit posterior is $\pi_m(\bm\vartheta_{-a} \mid \mathbf{y}) \propto p(\mathbf{y} \mid 0, \bm\vartheta_{-a})\,\pi_m(\bm\vartheta_{-a})$ with the same likelihood, since ``$\vartheta_a$ fixed at zero'' and ``path $a$ absent'' specify the identical sampling model.
The two posteriors coincide iff $\pi_m(\bm\vartheta_{-a}) \propto \pi(\bm\vartheta_{-a} \mid \vartheta_a = 0)$, which equality of proper densities promotes to the hypothesis.
The general case $\mathbf{C}_m \bm\vartheta = 0$ follows by an invertible reparameterisation whose Jacobian cancels as $\pi_a(0)$ did.
\end{proof}

\section{Score and Hessian of a Restricted Submodel}
\label{app:trace}

The restricted score \eqref{eq:laplace-2nd-restricted} requires the free-subspace score $\mathbf{s}_{j,m,\mathrm{free}}$ and Hessian $\mathbf{H}_{j,m,\mathrm{free}}$ of the cluster log-likelihood at the restricted centre $\bm\vartheta^\ast_m$.
Taking the mean structure saturated so that structural paths do not enter $\bm\mu(\bm\vartheta)$, the score and Hessian are analytic in the model-implied covariances:
writing $\mathbf{W}_{j,m} = \mathbf{V}_j(\bm\vartheta^\ast_m)^{-1}$, $\mathbf{G}_a^{(j,m)} = \partial \mathbf{V}_j(\bm\vartheta^\ast_m)/\partial \vartheta_a$, $\mathbf{G}_{ag}^{(j,m)} = \partial^2 \mathbf{V}_j(\bm\vartheta^\ast_m)/\partial \vartheta_a\, \partial \vartheta_g$, and $\tilde{\mathbf{y}}_j = \mathbf{y}_j - \mathbf{1}_{n_j} \otimes \bm\mu(\bm\vartheta^\ast_m)$,
\begin{align}
(\mathbf{s}_{j,m,\mathrm{free}})_a 
&= -\tfrac{1}{2}\,\tr \bigl(\mathbf{W}_{j,m} \mathbf{G}_a^{(j,m)} (\mathbf{I} - \mathbf{W}_{j,m}\,\tilde{\mathbf{y}}_j \tilde{\mathbf{y}}_j^\top)\bigr), \label{eq:score-restricted-free} \\
(\mathbf{H}_{j,m,\mathrm{free}})_{ag} 
&= -\tfrac{1}{2}\,\tr \bigl(\mathbf{W}_{j,m} \mathbf{G}_a^{(j,m)} \mathbf{W}_{j,m} \mathbf{G}_g^{(j,m)} (2\mathbf{W}_{j,m}\,\tilde{\mathbf{y}}_j \tilde{\mathbf{y}}_j^\top - \mathbf{I})\bigr) \nonumber \\
&\quad \quad -\tfrac{1}{2}\,\tr \bigl(\mathbf{W}_{j,m} \mathbf{G}_{ag}^{(j,m)} (\mathbf{I} - \mathbf{W}_{j,m}\,\tilde{\mathbf{y}}_j \tilde{\mathbf{y}}_j^\top)\bigr). \label{eq:hessian-restricted-free}
\end{align}
The derivatives $\mathbf{G}_a^{(j,m)}, \mathbf{G}_{ag}^{(j,m)}$ follow by chain rule from $\bm\Sigma_{\mathrm{w}}(\bm\vartheta), \bm\Sigma_{\mathrm{b}}(\bm\vartheta)$ and are obtained once per candidate.

Saturation of the mean structure is an assumption of these formulae, not of the construction.
It holds for every model fitted in this paper: the mean structure is estimated throughout (Section~\ref{sec:setting}) with free indicator intercepts $\bm\nu$ and latent intercepts fixed at zero, the \texttt{lavaan}-family identification default under \texttt{meanstructure = TRUE} used by the fits of Sections~\ref{sec:pisa} and~\ref{sec:midus}, so that $\bm\mu(\bm\vartheta) = \bm\nu$ and the structural matrices enter the likelihood only through the covariances.
A mean structure through which structural paths do act contributes the standard mean-derivative terms to \eqref{eq:score-restricted-free}--\eqref{eq:hessian-restricted-free} and changes nothing else in the construction.

\end{document}